\pgfplotsset{compat=newest}
\newtheorem{theorem}{Theorem}
\def\@bibdataout@aps{%
 \immediate\write\@bibdataout{%
  @CONTROL{%
   apsrev41Control,author="08",editor="1",pages="0",title="0",year="1",eprint="1"%
  }%
 }%
 \if@filesw
  \immediate\write\@auxout{\string\citation{apsrev41Control}}%
 \fi
}%
\newcommand{\dg}{^\dagger}
\newcommand{\smallfrac}[2]{\mbox{$\frac{#1}{#2}$}}
\newcommand{\half}{\smallfrac{1}{2}}
\newcommand{\op}[2]{\ket{#1}\!\bra{#2}}
\newcommand{\expt}[1]{\langle{#1}\rangle}\newcommand{\var}{\text{var}}
\newcommand{\nmax}{D}
\newcommand{\E}[1]{\hat{\mathbb{E}}_{#1}}
\newcommand{\Ep}[2]{{\hat{\mathbb{E}}}_{#1}\!\left( #2 \right)}
\begin{document}

\title{ 
Noise constraints on sensitivity scaling in super-Heisenberg quantum metrology
}

\author{Noah Lordi\orcidlink{https://orcid.org/0009-0005-6970-2240}}
\thanks{Corresponding author: John-Wilson-6@colorado.edu}
\affiliation{Department of Physics, University of Colorado, Boulder, Colorado 80309, USA}
\affiliation{These authors contributed equally to this work. }

\author{John Drew Wilson\orcidlink{0000-0001-6334-2460}}
\thanks{Corresponding author: John-Wilson-6@colorado.edu}
\affiliation{Department of Physics, University of Colorado, Boulder, Colorado 80309, USA}
\affiliation{These authors contributed equally to this work. }
\affiliation{JILA, University of Colorado, Boulder, Colorado 80309, USA}

\author{Murray J. Holland\orcidlink{0000-0002-3778-1352}}
\affiliation{Department of Physics, University of Colorado, Boulder, Colorado 80309, USA}
\affiliation{JILA, University of Colorado, Boulder, Colorado 80309, USA}

\author{Joshua Combes\orcidlink{0000-0003-3199-5744}}
\affiliation{Department of Electrical, Computer and Energy Engineering, University of Colorado, Boulder, Colorado 80309, USA}
\affiliation{School of Physics and School of Mathematics, The University of Melbourne, VIC 3010, Australia}

\date{\today}

\begin{abstract}
Quantum-enhanced metrology surpasses classical metrology by improving estimation precision scaling with a resource $N$ (e.g., particle number or energy) from $1/\sqrt{N}$ to $1/N$.
Through the use of nonlinear effects, Roy and Braunstein~\cite{Roy2008} derived a $1/2^N$ scaling.
However, later works argued this exponential improvement is unphysical and that even modest gains, like $1/N^2$, may vanish under noise. 
We show that, in the presence of small errors, the nonlinear interactions enabling metrological enhancement induce emergent errors. 
The errors propagate through the sensing protocol and are magnified proportional to any intended non-linear enhancement.  
We identify a critical value of the parameter to be estimated, for a fixed error, below which the emergent errors can be avoided.
\end{abstract}

\maketitle

The idea of using exotic quantum states, such as squeezed states, to enhance precision measurements is several decades old ~\cite{Caves1981,Bondurant1984,Yurke1986a,Yurke1986b,Kitagawa1991, Wineland1992, kitagawa1993,Holland1993}. These ideas have been experimentally implemented in noisy real-world environments, leading to improvements in gravitational wave detection~\cite{aasi2013,tse2019} and timekeeping with atomic clocks~\cite{vuletic2020}. It is expected that quantum metrology will continue to prove useful in a variety of applications~\cite{Degen_RMP_2017}.

A central aim of quantum metrology is to estimate a parameter $\varphi$ with precision $\Delta \varphi$, as summarized in \cref{fig:summary}. 
A probe state, $\ket{\psi}$, acquires information about $\varphi$ via unitary evolution, $\hat{U}(\varphi) = \exp[-i \varphi \hat{G}]$, where $\hat{G}$ is a Hermitian operator.
Then, measurements on the final state $\ket{\Psi} = \hat{U}(\varphi) \ket{\psi}$ yield an estimate of $\varphi$.
The achievable precision $\Delta \varphi$ depends on the resource $N = \expt{\psi|\hat G|\psi}$.
For the photon number operator, $\hat{G} = \hat{n}$, classical states achieve $\Delta \varphi \sim 1/\sqrt{N}$~\cite{Caves_SQL_1980}. 
Quantum resources improve this to the Heisenberg limit (HL), $\Delta \varphi = 1/N$~\cite{Holland1993}. 
Nonlinear metrology aims to surpass the HL by using a nonlinear $\hat{G}$, e.g., $\hat{G} = \hat{n}^2$~\cite{Luis2004,BeltranLuis2005,Roy2008,Boixo2007,Boixo2008,RivasLuis2010,Napolitano2011,HallWiseman2012,Braun2018,Johnsson_2021}, thereby acheiving a so-called super-Heisenberg limited scaling.
Notably, \citet{Roy2008} showed that an $N$-qubit Hamiltonian can achieve $\Delta \varphi = 1/2^N$, enabling exponential precision gains.

\begin{figure}[!ht]
    \centering
    \includegraphics[width=\columnwidth]{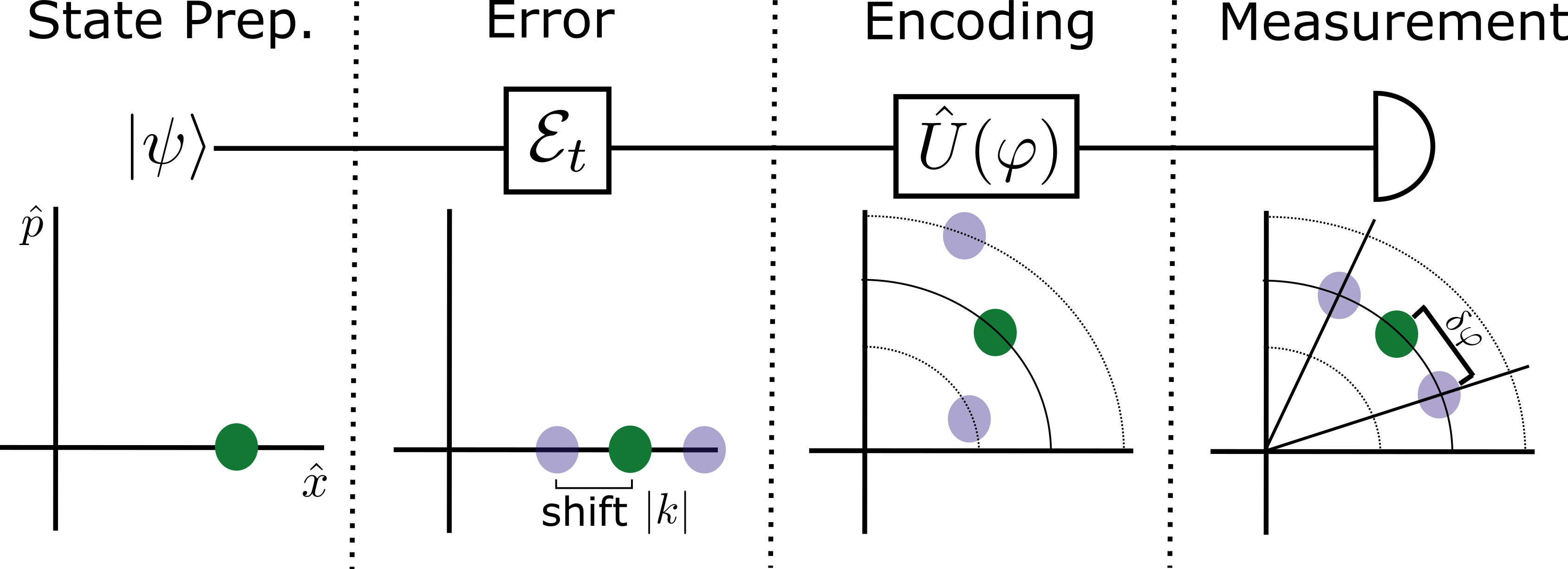}
    \caption{ A cartoon of noisy nonlinear metrology in the harmonic oscillator phase space. 
    State Prep.: A known probe state, $\ket{\psi}$, is prepared.
    Error: All errors that occur during state preparation, encoding, and measurement can be combined into a single channel $\mathcal{E}_t$. 
    In the graphic the state $\ket{\psi}$ is subject to an error which shifts the mean number of the state by $\pm k$. 
    Encoding: The post-error state evolves through a unitary $U(\varphi)$ generated by the Hamiltonian, $g(\hat{n})$, where $g$ is a nonlinear function of $\hat{n}$, e.g. $\exp(\hat n)$. This causes the parameter $\varphi$ to be encoded into the quantum state.
    Measurement: The error causes the quantum state to suffer an additional ``phase error'' upon measurement and readout. 
    In the cartoon, we depict the case that initial shift error is corrected, yet this ``phase error'' persists.
    We show the limits in which this error can be avoided, and the limits in which it dominates.
    }
    \label{fig:summary}
\end{figure}

This demonstrates the allure of nonlinear metrology, but many are reasonably skeptical that this enhanced sensitivity could ever be achieved in a realistic scenario with noise. \citet{Boixo2007} argued that the exponential scaling is un-physical based on the requirement to couple the $N$ constituent qubits to a rank-$N$ tensor field.
In that same work,~\citet{Boixo2007,Boixo2008} suggests that it might be possible to use a $z$-body interaction, e.g. $\hat G = \hat{n}^z$, to achieve an $\Delta \varphi \propto 1/N^z$ asymptotic enhancement in sensitivity. Subsequent work examined the degradation of enhanced sensitivity for specific noise channels see e.g. Refs.~\cite{ReyJiangLukin2007,ZwierzWiseman2014,LuisRivas2015,RossiParis2016,Beau2017,Rams2018,Czajkowski_2019,ImaiSmerzi2024, RiberiViola2024}. 
As the opportunity for nonlinear enhancement grows closer to experimental reality~\cite{Napolitano2011,Sewell2014,Xinfang2018,Huang2025}, there are unfortuantely no general limits of nonlinear metrology in the presence of noise.

In this Letter, we develop a framework for analyzing noisy nonlinear metrology that is applicable to  many physical systems and most noise. We establish sufficient conditions for nonlinear metrology to be useful in the presence of quantum noise and, noteably, when it might cause more harm than good. Specifically, we show that while nonlinear interactions enable enhancement, they also induce emergent errors—even when original errors are corrected. From these emergent errors, we show linear metrology only suffers a global phase--which does not harm sensing capabilities, and provide a sufficient condition for nonlinear metrology to avoid these errors.
When considering randomly chosen probe states, our findings reveal a phase transition, between usable and unusable regimes, governed by the interplay between nonlinearity in the Hamiltonian and initial error magnitude.

To make formal the earlier discussion of linear versus nonlinear metrology, we begin by considering a Hilbert space, $\mathcal{H}$, with orthonormal basis states identified by the set $\mathbb{S}$.
On this space, we define a generalized number operator 
\begin{equation}\label{eq:number_op}
\hat{n} = \sum_{n\in \mathbb{S}} n \op{n}{n},
\end{equation}
where $n$ is taken from the set $\mathbb{S}$, some examples of $\mathbb{S}$ are given in \cref{tab:spaces}.
We use this number operator to construct a Hamiltonian that is a function of $\hat{n}$
\begin{equation}\label{eq:generator}
\hat{G} = g(\hat{n}) \equiv \sum_{n\in \mathbb{S}} g(n) \op{n}{n}\, ,
\end{equation}
where $g$ is smooth, differentiable, and real valued.
If $g$ is a linear function then \cref{eq:generator} corresponds to linear metrology, while all other $g$'s can be regarded as nonlinear metrology.    
The Hamiltonian in \cref{eq:generator} will impart a phase, $\varphi$, onto a probe state, $\ket{\psi}$, via the unitary
\begin{equation}\label{eq:Unitary}
\hat{U}_g(\varphi) \equiv \exp( -i \varphi \hat{G} ) = \sum_{n\in \mathbb S} \mathrm{e}^{ -i \varphi g(n) } \op{n}{n}.
\end{equation}
The ultimate limit in sensing precision with a pure state is given by the quantum Cram\`er Rao bound:
\begin{equation}\label{eq:QCRB}
    \Delta \varphi \ge \frac{1}{ \sqrt{ \nu \ I\big(\varphi|\hat G,\psi\big)}} 
\end{equation}
where $\nu$ is the number of independent trials, and $I(\varphi|\hat{G},\psi) = 4 \expt{\psi|\hat{G}^2|\psi} - 4 \expt{\psi|\hat{G}|\psi}^2$ is the quantum Fisher information (QFI)~\cite{BraunCaves1994,Pezze2014,liu2019} for a pure state.
The maximum  QFI   $I(\varphi|G,\psi_{\rm opt}) = [g(n_{\rm max})-g(n_{\rm min})]^2$ is obtained for the probe state $\ket{\psi_{\rm opt}} \equiv (\ket{n_{\rm min}}+\ket{n_{\rm max}})/\sqrt{2}$ with $n_{\rm max}> n_{\rm min}$ both in $\mathbb S$~\cite{DegenCondition}.

Our aim now is to show when errors in nonlinear metrology are harmful and when they are harmless to the overall metrological usefulness.
To achieve this goal, we introduce a non-Hermitian operator basis to decompose any error channel or Lindblad evolution.
This will allow us to demonstrate which components of the errors have a non-trivial interplay with the nonlinear encoding step, thereby leading to emergent phase errors which may not be simply corrected~\cite{zhou2018}.

Errors accumulate continuously in time and can always be modeled by quantum channels, where we decompose the noisy encoding operation at some fixed time $t$ as $\tilde{\mathcal U}_t = \mathcal U_t \circ \mathcal E_t$ where $\mathcal{U}_t$ is the ideal encoding and $\circ$ denotes channel composition. 
As explained in \cref{fig:summary}, we can commute all errors from the state preparation, encoding, and measurement to just before the ideal encoding step. 
That yields $\mathcal{U}_t\circ \mathcal{E}_t = \mathcal{E}_m \circ \mathcal{U}_t\circ\mathcal{E}_{e} \circ \mathcal{E}_p$, where $\mathcal{E}_p$, $\mathcal{E}_{e}$, and $\mathcal{E}_m$ represent errors during preparation, encoding, and measurement, respectively.
 
Next, we decompose the Kraus operators of this error channel into an operator basis. We call this basis the {\em error basis}, and it is given by
\begin{equation}\label{eq:error_basis}
 \Ep{k}{\Phi} \equiv \left\{ \begin{array}{ll}
\mathrm{e}^{i\Phi \hat{n}} \Sigma_{|k|}^- & \text{ for } k < 0 \\
\Sigma_{|k|}^+ \mathrm{e}^{-i\Phi \hat{n}} & \text{ for } k \ge 0
  \end{array}\right. ,
\end{equation}
where the shift down and up operator are
\begin{equation}\label{eq:Errors}
\hat{\Sigma}^-_1 \equiv \sum_{n\in\mathbb{S}} \op{n}{n+1}, \quad
\hat{\Sigma}_1^+ =  \big( \hat{\Sigma}_1^- \big)^\dagger , \quad \hat{\Sigma}_{|k|}^\pm =  \big (\hat{\Sigma}_1^\pm\big)^{|k|} .
\end{equation}
For brevity, we write $\E{k}$ whenever $\Phi=0$. 
The edge cases such as $n\in\mathbb{S}$ but $n+1\neq\mathbb{S}$ are shown in Supplemental Material (SM)~\cite{suppMat}, and we note this operator basis is, in general, over-complete i.e. $\int d\Phi \sum_{k} \E{k}^\dagger(\Phi)\Ep{k}{\Phi} \propto \hat{\mathbb{I}}$.

We'd like to see how the nonlinearity of $g$ amplifies the effects of an error.
Consider an arbitrary element from our error basis, $ \Ep{k}{\Phi}$, and ``pull'' it through the encoding unitary:
\begin{equation}\label{eq:circuit}
\resizebox{0.88\columnwidth}{!}{
\raisebox{0.25em}{
\begin{quantikz}[row sep=0.295cm,column sep=0.2cm]
    & \gate[][0.7cm][0.7cm]{\Ep{k}{\Phi}} & \gate[][0.7cm][0.7cm]{\hat{U}_g(\varphi)}& \qw
\end{quantikz}
\hspace{-0.1 em} \raisebox{-0.25em}{=}
\begin{quantikz}[row sep=0.295cm,column sep=0.2cm]
    & \gate[][0.7cm][0.7cm]{\hat{U}_g(\varphi) } & \gate{\Ep{k}{\Phi}} & \gate{\hat{V}_k(\varphi)} & \qw
\end{quantikz}
},\tag{7a}
}
\end{equation}
which results in an emergent error $\hat{V}_k(\varphi)$. 
Mathematically, the emergence of  $\hat{V}_k(\varphi)$ in \cref{eq:circuit} is induced by the commutation of $\E{k}$ and $\hat{U}_g(\varphi)$,
\begin{equation}\label{eq:commute_err}
\hat{U}_g(\varphi)\Ep{k}{\Phi} = \hat{V}_k(\varphi)\ \Ep{k}{\Phi} \hat{U}_g(\varphi)\,,\tag{7b}
\end{equation}
where the form of this error is our first key result,
\begin{equation}\label{eqn:Vk}
\hat{V}_k(\varphi)\equiv \exp(-i \varphi \left[ g(\hat{n}) - g(\hat{n} - k) \right] ).\tag{7c}
\setcounter{equation}{7}
\end{equation}
The emergent phase error, $\hat{V}_k(\varphi)$, depends on the unknown phase $\varphi$ and on the nonlinearity of the function $g$ and is calculated explicitly in the SM~\cite{suppMat} for all the Hilbert spaces \cref{tab:spaces} -- some special cases were previously derived by ~\citet{Marinoff2024}.
Clearly the emergent error has no $\Phi$ dependence, so we fix $\Phi = 0$ without loss of generality. Even if the shift error, $\E{k}$, is undone, this emergent error will persist.

\begin{table}[t]
\floatbox[{\capbeside\thisfloatsetup{capbesideposition={right,top},capbesidewidth=3cm}}]{table}[\FBwidth]
{
\caption{
Hilbert spaces with basis states identified by the set $\mathbb{S}$.
We take $0\in\mathbb{N}$.
Edge cases with each $\mathbb{S}$ are dealt with in the SM~\cite{suppMat}.
}\label{tab:spaces}
}
{
\begin{tabular}{l|ll}
Hilbert Space, $\mathcal{H}$       & $\mathbb{S}$        \\ \hline
Bosonic             & $\mathbb{N}$        \\
Rigid Rotor    & $\mathbb{Z}$        \\
Spin Ensemble       & $\{0,\dots,N\}$  \\
Continuous Variable & $\mathbb{R}$                       
\end{tabular}
}
\end{table}

Let's consider some consequences of \cref{eqn:Vk}. First, for $g(\hat{n}) \propto \hat{n}$ we recover that $\hat{V}_k(\varphi) = \exp(-i \varphi k )$ is a global phase, and therefore linear metrology is unaffected. 
Second, we observe that $g(n) - g(n - k) \approx k g'(n) $ is approximately the derivative with respect to $n$ for small $k$. 
As a result, for $g(\hat{n}) \propto \hat{n}^z$ with real number $z\ge 1$ we find $\hat{V}_k(\varphi) \approx \exp(-i \varphi k z \hat{n}^{z-1} )$ is a nonlinear phase error due to the nonlinearity of $g$. 
This emergent error commutes with $\hat{U}(\varphi)$, making it hard to distinguish from the encoding. 
Furthermore, this error is hard to correct due to the dependence on both $\varphi$ and $k$, although the $k$ dependence can be heralded in principle and incorporated into the data-processing.

Now, we can assess the harm of the emergent error. 
When the parameter $\varphi$ is small, the effects of $\hat{V}_k(\varphi)$ are perturbatively small, and do not meaningfully impact nonlinear metrology.
To determine when these nonlinear errors meaningfully impact metrology, we compare when the shift error occurs before the unitary, i.e.
$\ket{\Upsilon}= \hat{U}_g(\varphi) \E{k}\ket{\psi} = \hat{V}_k(\varphi) \E{k} \hat{U}_g(\varphi) \ket{\psi}$, and when it occurs after $ \ket{\Psi} = \E{k} \hat{U}_g(\varphi) \ket{\psi}$.
Clearly, if the error happens after the encoding there is no nonlinear error. 

We argue that if the two states are similar, as diagnosed by the state fidelity, the errors do not significantly degrade the metrological utility, i.e. 
\begin{equation} \label{eq:overlap}
|\langle \Psi | {\Upsilon} \rangle|^2 = |\bra{\Psi}\hat{V}_k(\varphi) \ket{\Psi}|^2 = 1 - O( \epsilon)
\end{equation}
for $\epsilon \ll 1$. There is an important assumption that the shift error does not move amplitude out of the Hilbert space, which loosely means that the state $\ket{\psi}$ cannot have significant amplitudes below $k$. Mathematically this is a condition on $\ket{\psi}$ such that $|\langle \Psi | \Psi \rangle|^2 = 1$. \cref{eq:overlap} guarantees that, under any measurements and data processing, the resultant probability distributions will be at most of order $\epsilon$ different~\cite{normNote} and therefore the estimated values of $\varphi$ will be effectively unchanged as well.

For an arbitrary probe state, $\ket{\psi}$,~\cref{eq:overlap} can always be satisfied so long as $\varphi$ is bounded in magnitude by a sufficiently small value, so that $\hat{V}_k(\varphi)$ only perturbatively changes the state.
The value of $\varphi$ below which~\cref{eq:overlap} holds, is the {\em critical phase}.
We now derive the conditions for a critical phase. These conditions exist for all Hilbert spaces given in~\cref{tab:spaces} as discussed in the SM~\cite{suppMat}. 
Below we restrict ourselves to the bosonic case, $\mathbb{S} = \mathbb{N}$, and introduce a  ``photon" number cutoff, $\nmax$.

\begin{theorem}[critical phase for fixed error]\label{thrm:noise}
For a fixed shift error $k$, if the encoded phase $\varphi$ is sufficiently small such that 
\begin{equation}\label{eq:go_no_go}
|\varphi| \leq \frac{\sqrt{\epsilon}}{  \max_{n\leq \nmax} | g(n+k) - g(n) | },
\end{equation}
then $|\langle \Psi | {\Upsilon} \rangle|^2 = 1 - O( \epsilon)$, where $\epsilon\ll1$ is the error bound.
\end{theorem}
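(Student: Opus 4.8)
The plan is to reduce the statement to a single spectral estimate. By the commutation identity \cref{eq:commute_err}, $\ket{\Upsilon} = \hat{V}_k(\varphi)\ket{\Psi}$, so the quantity to control is $|\langle\Psi|\Upsilon\rangle|^2 = |\bra{\Psi}\hat{V}_k(\varphi)\ket{\Psi}|^2$. Introduce the Hermitian, number-diagonal operator $\hat{A} \equiv g(\hat{n}) - g(\hat{n}-k) = \sum_n \bigl[g(n)-g(n-k)\bigr]\op{n}{n}$, so that \cref{eqn:Vk} becomes $\hat{V}_k(\varphi) = \exp(-i\varphi\hat{A})$. The task is then to show that the hypothesis \cref{eq:go_no_go} on $|\varphi|$ forces $\bra{\Psi}e^{-i\varphi\hat{A}}\ket{\Psi}$ to lie within $O(\epsilon)$ of unity in modulus.

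First I would trade fidelity for Euclidean distance. Because $\hat{V}_k(\varphi)$ is unitary and $\ket{\Psi}$ is a unit vector --- this is exactly the stated hypothesis that the shift error removes no amplitude from $\mathcal{H}$, i.e. $\ket{\psi}$ has negligible weight on $n<k$, so that $\langle\Psi|\Psi\rangle=\langle\Upsilon|\Upsilon\rangle=1$ --- one has
\[
1 - |\langle\Psi|\Upsilon\rangle|^2 \;\le\; 2\bigl(1-\mathrm{Re}\,\langle\Psi|\Upsilon\rangle\bigr) \;=\; \bigl\|\,(\hat{\mathbb{I}}-e^{-i\varphi\hat{A}})\ket{\Psi}\,\bigr\|^2 .
\]
Expanding in the number basis with weights $p_n\equiv|\langle n|\Psi\rangle|^2$ and using the elementary bound $|1-e^{-i\theta}|\le|\theta|$ for real $\theta$ gives
\[
\bigl\|\,(\hat{\mathbb{I}}-e^{-i\varphi\hat{A}})\ket{\Psi}\,\bigr\|^2 = \sum_n p_n\,\bigl|1-e^{-i\varphi[g(n)-g(n-k)]}\bigr|^2 \;\le\; \varphi^2\sum_n p_n\,\bigl[g(n)-g(n-k)\bigr]^2 .
\]

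Next I would bound the spectral factor. The state $\ket{\Psi}=\E{k}\hat{U}_g(\varphi)\ket{\psi}$ has support on $n\le\nmax+k$ (the probe lives below the cutoff, $\hat U_g$ is diagonal, and the shift moves it up by at most $k$), so reindexing $m=n-k$ rewrites each contributing eigenvalue $g(n)-g(n-k)$ as $g(m+k)-g(m)$ with $m\le\nmax$; hence the weighted sum is at most $\max_{n\le\nmax}\bigl(g(n+k)-g(n)\bigr)^2=:M^2$. Combining the two displays, $1-|\langle\Psi|\Upsilon\rangle|^2\le\varphi^2 M^2$, and the hypothesis $|\varphi|\le\sqrt{\epsilon}/M$ then gives $1-|\langle\Psi|\Upsilon\rangle|^2\le\epsilon$; together with the trivial bound $|\langle\Psi|\Upsilon\rangle|^2\le1$ this is exactly $|\langle\Psi|\Upsilon\rangle|^2=1-O(\epsilon)$.

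The only non-mechanical part, and the main obstacle, is the normalization/support bookkeeping: making precise the sense in which ``the shift error does not move amplitude out of the Hilbert space'' so that $\ket{\Psi}$ is normalized and so that the eigenvalues of $\hat{A}$ that actually contribute are exactly those with $n\le\nmax$ --- including the edge block $n<k$ of $\hat{\Sigma}^{\pm}_{k}$ and the opposite sign of $k$, handled as in the SM~\cite{suppMat}. Everything else is a two-line estimate; an equivalent route bounds $\mathrm{Re}\,\bra{\Psi}e^{-i\varphi\hat{A}}\ket{\Psi}=\bra{\Psi}\cos(\varphi\hat{A})\ket{\Psi}\ge 1-\tfrac12\varphi^2\bra{\Psi}\hat{A}^2\ket{\Psi}$ using $\cos x\ge1-x^2/2$, and the same argument transfers verbatim to the rotor, spin-ensemble, and continuous-variable cases of \cref{tab:spaces} with the maximum taken over the appropriate index range.
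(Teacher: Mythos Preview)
Your proposal is correct and essentially parallels the paper's proof: both compute the overlap in the number basis, reduce to the per-level bound $\varphi^2\Delta_k(n)^2\le\epsilon$ on the support of the probe, and invoke the normalization assumption $\langle\Psi|\Psi\rangle=1$ together with the reindexing $n\mapsto n-k$ that identifies the relevant eigenvalues with $g(m+k)-g(m)$ for $m\le D$. The only cosmetic difference is that you route through $1-|\langle\Psi|\Upsilon\rangle|^2\le\|(\hat{\mathbb{I}}-\hat V_k)\ket{\Psi}\|^2$ and the exact inequality $|1-e^{-i\theta}|\le|\theta|$, whereas the paper Taylor-expands $\big|\sum_n|c_n|^2 e^{-i\varphi\Delta_k(n)}\big|^2$ directly and carries an $O(\varphi^4)$ remainder; your version is marginally cleaner but the content is identical.
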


\begin{proof}
First, let $\Delta_k(n) \equiv g(n+k) - g(n)$. 
Now, take $k$ to be fixed, and assume $|\varphi| \leq \sqrt{\epsilon} / ( \max_{n\leq \nmax} |\Delta_k(n) | ) $ for $\epsilon \ll 1$. Notice that
\begin{equation}\label{eq:EVE}
\E{k}^{\dagger} \hat{V}_k(\varphi) \E{k} = \sum_{n=m}^\nmax \mathrm{e}^{-i \Delta_k(n) \varphi } \op{n}{n},
\end{equation}
for $m = \max(-k,0)$ because if $k<0$ then any $\ket{n}$ with $n<|k|$ is shifted out of the Hilbert space.
Now, expand the probe state in the number basis $\ket{\psi} = \sum_{n=0}^\nmax c_n \ket{n}$
\begin{equation}
\begin{aligned}
    \big|\langle\Psi|{\Upsilon}\rangle\big|^2 
    &= \Big| \sum^\nmax_{n=m} |c_n|^2 \mathrm{e}^{-i\Delta_k(n) \varphi }\Big|^2.
\end{aligned}
\end{equation}
Then using the expansion of the exponential to find
\begin{equation}
\begin{aligned}
    \big|\langle\Psi|{\Upsilon}\rangle\big|^2 
    &\ge 1 -\sum_{n=m}^\nmax|c_n|^2 [ \varphi \Delta_k(n)] ^2  -O(\varphi^4) \\
    &\ge 1 - \epsilon - O(\epsilon^2).
\end{aligned}
\end{equation}
Where, we used the assumption that $|\langle \Psi | \Psi \rangle|^2 = \sum_{n=m}^\nmax |c_n|^2 = 1$, then dropped all strictly positive terms, and lastly used $[ \varphi \Delta_k(n)] ^2 \leq \epsilon$ by assumption.
Therefore, if $\varphi$ is sufficiently small then $1 \geq |\langle \Psi | {\Upsilon} \rangle|^2 \geq 1 - O(\epsilon)$, and thus $|\langle \Psi | {\Upsilon} \rangle|^2 = 1 - O(\epsilon)$, independent of $\ket{\psi}$.
\end{proof}

From \cref{eq:go_no_go} we may read off the critical phase, $\varphi_c$:
\begin{equation}\label{eq:phiLim}
|\varphi| < |\varphi_c| \sim \gamma \frac{\sqrt{\epsilon}}{ | k\ g'(n)|},
\end{equation}
for all $n\in \{0,\ldots, \nmax\}$, and where $\gamma \geq 1$ is a scaling factor that depends on the specific probe state. 
The scaling factor diagnoses how much a specific state is affected by $\hat{V}_k(\varphi)$.
In the worst case scenario for any general probe state, proved by \cref{thrm:noise}, $\gamma \sim 1$. It may seem odd to bound an unknown parameter which we aim to estimate. It's natural for N00N states~\cite{N00N}, which can only sense phases less than $2\pi/N$.  The critical phase similarly limits the maximum interrogation time between correction steps in error-corrected metrology~\cite{suppMat,zhou2018}. Large evolution steps let amplified errors grow uncontrollably, restricting sensing to either small phases or rapid correction steps on the order of $1/g'(n)$~\cite{zhou2018}.

In general,~\cref{eq:phiLim} must be satisfied for each $k$, or the maximum $k$, corresponding to an $\E{k}$ in the error channel decomposition.
If~\cref{eq:phiLim} is always satisfied, then~\cref{eq:overlap} can be used to upper bound on the change in QFI:
\begin{equation}\label{eq:QFIDiff}
\big| I\big( \varphi| \hat G, \psi \big) - I\big( \varphi|\hat G,\Upsilon \big) \big| \approx\mathcal{O}(\sqrt{\epsilon}),
\end{equation}
where $I\big( \varphi| \hat G, \Upsilon \big)$ is the QFI for the case that the error occurs prior to the unitary encoding.
The derivation of this bound is provided in the SM~\cite{suppMat}. 

Surprisingly, if the $k$ dependence is heralded for each independent trial, through e.g. measurement, there is a small increase in the QFI. In that case $\hat V_k(\varphi)\E{k}\hat U(\varphi)$ can be viewed as the new encoding step. 
The resulting QFI, calculated in SM~\cite{suppMat}, is $\propto [g(n_{\rm max}+k)-g(n_{\rm min}+k)]^2$ for the optimal state, but it's no longer an asymptotic bound since $k$ may vary between trials.
Furthermore, treating this as a new encoding step requires that one knows \textit{precisely when} the error occurred. We have so far avoided this problem by commuting all errors to before the unitary encoding.
Otherwise, we would obtain $\hat V_k(\theta)\E{k}\hat U(\varphi)$, where $0 \leq |\theta| \leq |\varphi|$ depends on the error timing and introduces a nuisance parameter~\cite{SuzukiHayashi2020}.
The QFI for this case is shown in the SM~\cite{suppMat}, and is strictly less than the case with no-error.

For phases smaller than the critical phase, ~\cref{eq:QFIDiff} shows it is possible that the gains in sensitivity due to the nonlinear encoding may still outweigh the effect of increased noise, but the exact details of this trade-off depend on the probe state.
To investigate this we compute a fidelity, based on ~\cref{eq:overlap}, that illuminates how propagated errors corrupt the estimation procedure. Note that \cref{eq:overlap} only decreases due to $\hat{V}_k(\varphi)$, provided the assumption $|\langle \Psi | \Psi \rangle|^2 = 1$ holds. 
Therefore we can fully remove both $\E{k}$ and $\hat{U}_g(\varphi)$ to study just the effect of the emergent errors.
Then the fidelity becomes
\begin{equation}\label{eq:fidelity}
    F(k,g,\varphi,\psi) = |\langle\psi|\hat{V}_k(\varphi)|\psi\rangle|^2.
\end{equation}
Clearly if $F\approx 0$ then $V_k(\varphi)$ has a large effect on $\ket{\psi}$ whereas $F\approx 1$ indicates little-to-no effect.

The fidelity in \cref{eq:fidelity}, averaged over all pure states~\cite{Molmer2007,Nielsen2002} up to the Fock space cutoff, $\nmax$, is:
\begin{equation}\label{eqn:avg_fid}
   \overline{F}_\nmax(k,g,\varphi) = \frac{\nmax+ \Big|\sum_{n=0}^\nmax \mathrm{e}^{-i\varphi(g(n)-g(n-k))}\Big|^2}{\nmax(\nmax+1)} .
\end{equation}
In the large photon number cutoff limit, $\nmax\gg 1$,~\cref{eqn:avg_fid} exhibits two distinct regimes of behavior where it is either unity, or zero.
For a general function $g$, the crossover between these regimes, which we identify by $\overline{F}_\nmax(k,g,\varphi) = 1 - \epsilon$ for $1\gg\epsilon>1/D$, occurs when
\begin{equation}\label{eq:phi_crit_avg}
|\varphi| \approx \varphi^\mathrm{avg}_c \equiv \frac{ 2 \sqrt{\epsilon} }{ \sum_{n,m = 0}^D ( \Delta_k(n) - \Delta_k(m) )^2},
\end{equation}
where, as before, $\Delta_k(n) = g(n+k) - g(n)$.
In the limit that the photon number $\nmax$ tends towards infinity, this crossover becomes abrupt and discontinuous, resembling a phase transition between the regimes of usable and unusable non-linear quantum metrology.

 \begin{figure}[t]
    \centering
    \includegraphics[width=\columnwidth]{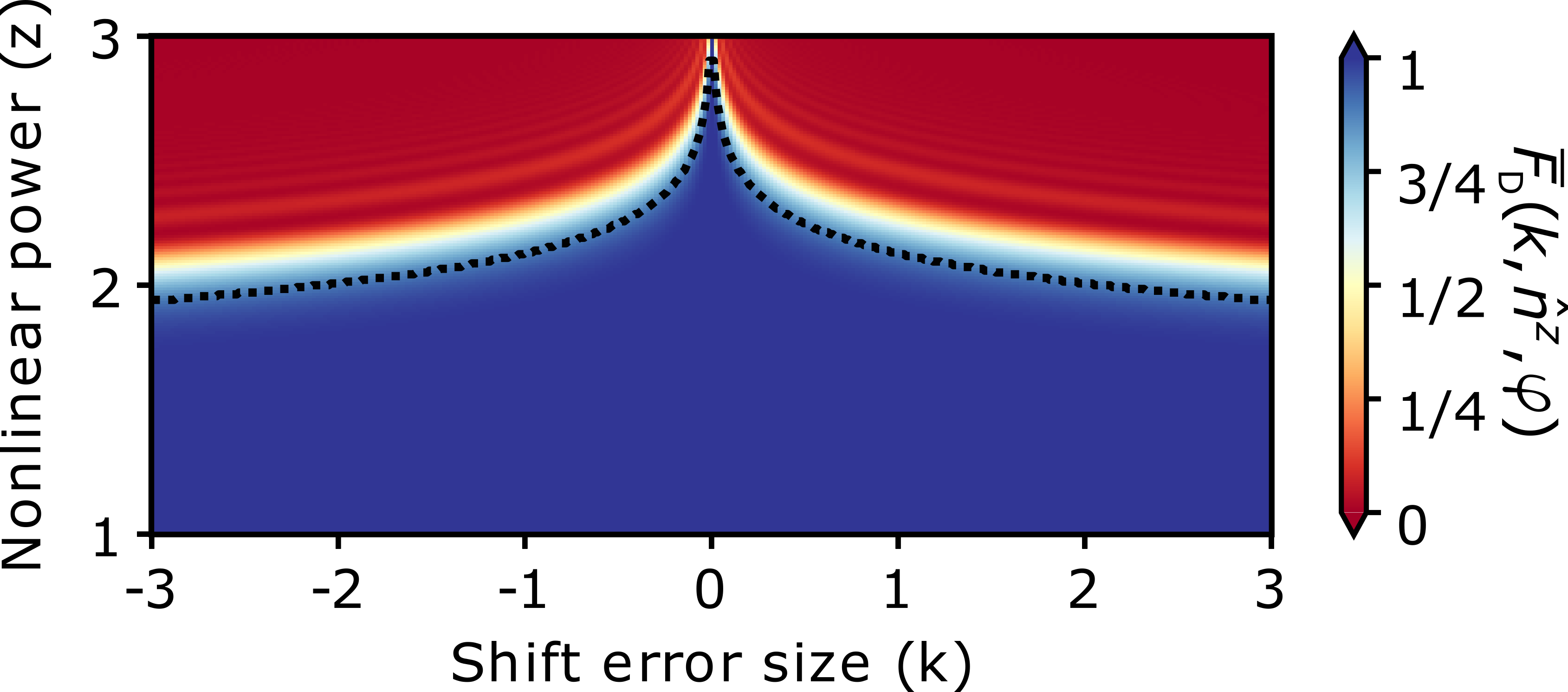}
    \caption{
    The crossover in metrological usefulness of noisy nonlinear metrology. The average fidelity metric \cref{eqn:avg_fid} is plotted as a function of the shift error $k$, analytically continued to the reals, and nonlinearity $z$ where $g(\hat{n}) = \hat{n}^{z}$. This plot is generated with a particular value of $\varphi = \pi/(10\nmax)$ and $\nmax=200$, increasing $\varphi$ will lower the nonlinear power where the phase transition occurs. As the size of shift error $k$ increases in magnitude, we see the emergence of a phase transition between fidelities near 1 and near 0. We plot the  curve where fidelity crosses $.9$ (dotted line), i.e. $\epsilon=0.1$.
    }
    \label{fig:transition}
\end{figure}

To understand this better, consider $g(\hat n) = \hat n^z$ for some power $z\ge 1$. This is a $z$-body interaction and sufficiently captures the scaling of any analytic $g(n)$.
For example, in the self-Kerr effect, $g(\hat n) = \hat n^2$, the limit of large photon number cutoff gives
\begin{equation}
    \overline{F}_\nmax(k,n^2) = \frac{\sin^2(\varphi \nmax k)}{(\varphi \nmax k)^2}.
\end{equation}
This will be near zero unless $k=0$, or $|\varphi| \propto 1/\nmax$ whereupon it will be one. 
For the average state,~\cref{eq:phi_crit_avg} suggests that a higher power of $z$ will further restrict the domain of $\varphi$. One could design the probe state to avoid this behavior altogether, but we leave that for future work.

By considering a fixed $\varphi$ but allowing the function $g$ to vary, one can study the crossover as a response to the nonlinearity of the function itself.
In~\cref{fig:transition}, we plot the average fidelity for the case that $g(\hat{n}) = \hat{n}^{z}$, as a function of $z$ and $k$. 
We choose a fixed value of $\varphi = \pi/(10\nmax)$, which, in the $\nmax\rightarrow \infty$ limit, corresponds to a discontinuous transition near $z=2$.
For all values of $z<2$, there is little to no loss of fidelity as $\nmax$ grows, while $z>2$ corresponds to a vanishing fidelity.

Lastly, we highlight an exotic method to avoid this critical behavior altogether via engineering the nonlinear Hamiltonian itself.
Since the critical value $\varphi_c$ depends on the derivative of $g$, one could engineer a function $g(\hat{n})$ that is locally flat around some operating point of the probe state but globally scales nonlinearly with photon number. 
For example, if one's goal is a nonlinear scaling of $g(\hat n)= \hat{n}^z$ for integer $z$, then one may construct a related function $\tilde{g} (\hat{n})$ that is locally flat, such as
\begin{equation}\label{eq:floorH}
    \tilde{g}(\hat{n}) = \left(\left \lfloor \frac{\hat{n}}{\mu}\right \rfloor \mu \right)^z.
\end{equation}
This function is a series of flat plateaus of length $\mu$, that scales globally as $n^z$. We choose a probe state to be a superposition of the state in the middle of the zeroth plateau and the state in the middle of the $p$'th plateau.
Such a probe state could endure shift errors of up to $\pm \mu/2$ and still be useful.
Highly engineered interactions of this form may be difficult to create, but could be implemented in certain platforms using, e.g. SNAP gates~\cite{SNAPgateEXP2015,SNAPgateTHY2015}.

\textit{Conclusions}. ---
In this work, we demonstrate the existence of emergent errors arising from the interplay between nonlinear metrology and noise. Our approach decomposes an error channel into an operator basis and analyzes the size of the emergent error for arbitrary basis elements. 
We identified a sufficient condition--the critical phase--where noise has negligible impact on nonlinear metrology. The critical phase is a value of the parameter to be sensed below which the noise doesn't harm nonlinear metrology. The critical phase primarily depends on the derivative of the nonlinear function of $\hat n$ i.e. $\phi_c\sim 1/ g'(\hat n)$.  
This dependence, $ 1/g'(\hat{n})$, also determines the time interval between error correction steps in error-corrected metrology~\cite{zhou2018}, making it a crucial factor for advanced quantum metrology schemes. 
Lastly, we highlight that leveraging heralded information could preserve nonlinear enhancements in the presence of noise and thus open new paths for advancing quantum metrology.

In the past, there have been arguments about whether highly nonlinear metrological schemes are physical~\cite{Roy2008,Boixo2007,Boixo2008}. However, modern experiments are rapidly creating the opportunity for these nonlinear systems to be studied and leveraged for quantum advantage~\cite{Beau2017,Tsarev_2019,nie2018,luo2024}.
The existence of emergent errors does not undo the possible benefits of nonlinear metrology; rather it highlights an important consideration in achieving this advantage. \\

\noindent {\em Acknowledgments:} The authors acknowledge helpful discussions with Ivan H Deutsch and Michael J W Hall. All authors were supported by National Science Foundation under QLCI Award No. OMA-2016244. NL and JC also acknowledge support from the Office of Naval Research Award No. N00014-21-1-2606. JW and MH also acknowledge support from the Department of Energy Quantum Systems Accelerator, Grant No. 7565477, and the National Science Foundation, Grant No. PHY-2317149.

\newpage 

\appendix
\begin{widetext}

\section{Supplemental Material: Noise constraints on sensitivity scaling in quantum nonlinear metrology}

\tableofcontents

\section{Explicit Calculations of the Phase Error}
The emergent phase error is due to the commutation of $\hat{U}_g(\varphi)$ and $\Ep{k}{\Phi}$, and can be implicitly defined by
\begin{equation}
\hat{U}_g(\varphi)\Ep{k}{\Phi} = \hat{V}_k(\varphi)\ \Ep{k}{\Phi} \hat{U}_g(\varphi).
\end{equation}
To calculate $\hat{V}_k(\varphi)$ we will do it for the two casses in the definition of $\Ep{k}{\Phi}$:
\begin{equation}
 \Ep{k}{\Phi} \equiv \left\{ \begin{array}{ll}
\mathrm{e}^{i\Phi \hat{n}} \Sigma_{|k|}^- & \text{ for } k < 0 \\
\Sigma_{|k|}^+ \mathrm{e}^{-i\Phi \hat{n}} & \text{ for } k \ge 0
  \end{array}\right. .
\end{equation}

In order to calculate $\hat{V}_k(\varphi)$ it is simplest to do for two cases, first when $k\geq 0$ and then when $k<0$. In both cases, we will show that $\hat{V}_k(\varphi)$ is the same.
First, assume that $k\geq0$,
\begin{equation}
\begin{aligned}
\hat{U}_g(\varphi)\Ep{k}{\Phi} =& \left( \sum_{m\in \mathbb S} \mathrm{e}^{ -i \varphi g(m) } \op{m}{m} \right)  \sum_{n\in \mathbb S} \op{n+k}{n} \mathrm{e}^{- i \Phi n } \\
\end{aligned}
\end{equation}
whereupon, if $n+k \notin \mathbb{S}$ then $\ket{n+k}$ is set to the null vector, i.e. $\ket{n+k} = 0$.
This case that occurs in, for example, the spin ensemble Hilbert space where you cannot excite a spin ensemble ``past'' all $N$ spins being excited.
We comment further on this at the end of the calculation.
We have that use $m = n + k$ and
\begin{equation}
\begin{aligned}
\hat{U}_g(\varphi)\Ep{k}{\Phi} =& \sum_{n\in \mathbb S} \mathrm{e}^{ -i \varphi g(n+k) } \op{n+k}{n} \mathrm{e}^{- i \Phi n } \\
=& \sum_{n\in \mathbb S} \mathrm{e}^{ -i \varphi [ g(n+k) - g(n) ] } \op{n+k}{n}  \mathrm{e}^{- i \Phi n } \mathrm{e}^{ -i \varphi g(n) } \\
=& \mathrm{e}^{ -i \varphi [ g(\hat{n}) - g(\hat{n}-k) ] } \sum_{n\in \mathbb S} \op{n+k}{n}  \mathrm{e}^{- i \Phi n } \mathrm{e}^{ -i \varphi g(n) }
\end{aligned}
\end{equation}
where we've used the fact that $\mathrm{e}^{ -i \varphi [ g(\hat{n}) - g(\hat{n}-k) ] } \ket{m+k} = \mathrm{e}^{ -i \varphi [ g(n+k) - g(n) ] } \ket{m+k}$ to factor out the phase shift, which we now identify as $\hat{V}_k(\varphi)$ and finish the calculation:
\begin{equation}
\begin{aligned}
\hat{U}_g(\varphi)\Ep{k}{\Phi}  =& \hat{V}_k(\varphi) \left( \sum_{n\in \mathbb S} \op{n+k}{n}  \mathrm{e}^{- i \Phi n } \right) \left( \sum_{m\in \mathbb S} \op{m}{m}  \mathrm{e}^{- i \varphi g(m) } \right)  \\
=&  \hat{V}_k(\varphi)\ \Ep{k}{\Phi} \hat{U}_g(\varphi)
\end{aligned}
\end{equation}
as needed. 

Now, if $k<0$, we can do the analogous calculation:
\begin{equation}
\begin{aligned}
\hat{U}_g(\varphi)\Ep{k}{\Phi} =& \left( \sum_{m\in \mathbb S} \mathrm{e}^{ -i \varphi g(m) } \op{m}{m} \right)  \sum_{n\in \mathbb S} \mathrm{e}^{i \Phi (n+k) } \op{n+k}{n} \\
\end{aligned}
\end{equation}
whereupon, if $n+k \notin \mathbb{S}$ then $\ket{n+k}$ is once again set to the null vector.
This might be the case for both spin ensembles or bosonic ensembles, where negative photon numbers aren't considered physical.
Once again, we use $m = n + k$ and
\begin{equation}
\begin{aligned}
\hat{U}_g(\varphi)\Ep{k}{\Phi} =& \sum_{n\in \mathbb S} \mathrm{e}^{ -i \varphi g(n+k) } \mathrm{e}^{+ i \Phi (n+k) } \op{n+k}{n} \\
=& \hat{V}_k(\varphi) \left( \sum_{n\in \mathbb S} \mathrm{e}^{+ i \Phi (n+k) } \op{n+k}{n} \right) \left( \sum_{m\in \mathbb S} \op{m}{m}  \mathrm{e}^{- i \varphi g(m) } \right)  \\
=&  \hat{V}_k(\varphi)\ \Ep{k}{\Phi} \hat{U}_g(\varphi)
\end{aligned}
\end{equation}
as needed.
Therefore
\begin{equation}\label{eq:Vk}
 \hat{V}_k(\varphi) \equiv \mathrm{e}^{ -i \varphi [ g(\hat{n}) - g(\hat{n}-k) ] }.
\end{equation}

In the cases that $n+k \notin \mathbb{S}$, we have included extra states as an accounting trick.
This is equivalent to working in a lager Hilbert space to ``make $\Ep{k}{\Phi}$ unitary" (which it happens to be for $\mathbb{S} = \mathbb{Z}$) and then subsequently projecting back down to only the Hilbert space where we want to consider observables.
This projection could be handled by, for example, projection operators acting from the left to delete states not in the set being considered which is not unlike schemes for regularization~\cite{pauli1949}.
However, the space we would project onto is dependent on $\mathbb{S}$, thereby making these methods messier.
We highlight that any intermediate states which are un-physical, like $n+k \notin \mathbb{S}$, do not contribute to any observable physics in the final calculation because $\Ep{\theta}{k}$ would remove them by projection.

\section{Continuous Case}\label{sec:cont}
In the main text and in the calculation above, we have dealt exclusively with states labeled by discrete quantum numbers.
Here, we outline the case that one considers continuous operators and states which may be labeled by a continuum of quantum numbers.
All the results in the main text still hold without modification.

In the continuous Hilbert space, $\mathbb{S} = \mathbb{R}$, and we use the position operator $\hat x = \int_{-\infty}^{+\infty} \op{x}{x} dx$ as the generalized ``number'' operator, which has a continuous spectrum over the continuum of states $\ket{x}$ for $x \in \mathbb{S}$.
The non-linear Hamiltonian is now a function of $\hat x$ i.e.  $\hat{G} \equiv g(\hat{x})$.
This generates the unitary $U_g(\varphi) = \exp(-i\varphi \hat{ G } )$ where $\varphi \in \mathbb{R}$ need not be between $0$ and $ 2 \pi$. 
The analogous choice for the error basis is the displacement operators:
\begin{equation}
\left\{ \hat{D}_k(\Phi) = \hat{D}\left( \frac{ k \mathrm{e}^{-i \Phi}}{ \sqrt{2} } \right)\Bigg| k\in \mathbb{R}, \Phi \in(-\pi,\pi)\right\}
\end{equation}
where $D(\alpha) = \exp( \alpha \hat{a}^\dagger - \alpha^* \hat{a} )$ for $\alpha\in\mathbb{C}$ and $\hat{a} = ( \hat{x} + i \hat{p} ) / \sqrt{2}$.
Now, $k$ is the ``number'' of the shift error while $\Phi \in [0, \pi)$ decides the direction of the displacement, rather than adding a rotation.

It is straight forward to show the error propagation expression analogous to \cref{eq:Vk} is
\begin{equation}
\hat{U}_g(\varphi) \hat{D}_k(\Phi) =\hat{\mathcal{V}}_k(\varphi) \hat{D}_k(\Phi) U_g(\varphi).
\end{equation} 
Now, all the displacement errors are unitary so it is straightforward to solve:
\begin{equation}
\begin{aligned}
\hat{\mathcal{V}}_k(\varphi)=& \hat{U}_g(\varphi) \hat{D}_k(\Phi) \hat{U}^\dagger_g(\varphi) \hat{D}^\dagger_k(\Phi).
\end{aligned}
\end{equation}
Analogous to phase errors above, any errors corresponding to displacements with an $\hat{x}$ operator ($\Phi = \pi/2$) will commute with the Hamiltonian. 
Errors corresponding to displacements with a $\hat{p}$ operator will not commute, so we restrict to $\Phi = 0$. 
In this case, the propagation of this error through $U_g(\gamma)$ leads to the same nonlinear enhancement as every other case:
\begin{equation}
\hat{\mathcal{V}}_k(\varphi) = \mathrm{e}^{-i \varphi [ g(\hat{x}) - g(\hat{x} - k ) ] }.
\end{equation}

\section{The Go-No-Go-Law for other Hilbert Spaces}
Here, we briefly discuss the cases outside just that of the bosonic cases.
In the case of a spin ensemble, where $\mathbb{S} = \{0,...,N\}$ the proof goes the same where $M=N$ directly.
In the case of the rigid rotor, where $\mathbb{S} = \mathbb{Z}$ we find one minor modification.
We again consider a cutoff, but now it is symmetric about zero so that we restrict to $-M\leq n \leq M$.
This means
\begin{equation}
\E{k}{}^{\dagger} \hat{V}_k(\varphi) \E{k} = \sum_{n=-M}^M \mathrm{e}^{-i \Delta_k(n) \varphi } \op{n}{n},
\end{equation}
while the rest of the proof continues the same, with the added nicety that $\E{k}$ is now unitary.

In a continuous variable Hilbert space, where $\mathbb{S} = \mathbb{R}$, we will find that the cutoff, $M$, should instead be a maximum cut-off in the ``position'' operator.
We outline the modifications to the formalism in Sec.~\ref{sec:cont}.
The proof of the Go-No-Go-Law remains the same, except all operators are now unitary and the sums over $\mathbb{S}$ are integrals.
The critical phase may now be interpreted as a critical displacement below which the non-linear errors won't have a significant effect.
For posterity, the full proof is shown in Sec.~\ref{sec:cont} as well.

\subsection{Critical ``Phase'' in The Continuous Case:}
Now, we have the tools to understand the continuous case of the Go-No-Go-Law, which largely goes the same as the discrete cases.
Here, we will work in the ``position'' basis and, rather than a max photon cut-off, we will consider a maximum position, $A$.
First, for $\Phi \neq 0$ we note that $\hat{D}_k(\Phi)$ has a component which commutes with $g(\hat{x})$.
To simplify the discussion, in analogy to the main text, we will fix $\Phi = 0$;
\begin{equation}
\hat{D}_k = \exp\left( \frac{k}{\sqrt{2}} \left( \hat{a}^\dagger - \hat{a} \right) \right) = \exp( - i k \hat{p} ).
\end{equation}
Now, we turn to the Go-No-Go-Law, which we restate here.

\emph{The Go-No-Go-Law:} For a fixed shift $k$, if the encoded displacement $\varphi$ is sufficiently small such that 
\begin{equation}
|\varphi| \leq \frac{\sqrt{\epsilon}}{  \max_{|x| \leq A} | g(x+k) - g(x) | },
\end{equation}
with $\epsilon \ll 1$, then $|\langle \Psi | \tilde{\Psi} \rangle|^2 = 1 - \mathcal{O}( \epsilon)$.

Proof: First, let $\Delta_k(x) \equiv g(x+k) - g(x)$. 
Now, take $k$ to be fixed, and assume $|\varphi| \leq \sqrt{\epsilon} / ( \max_{|x| \leq A} |\Delta_k(x) | ) $ for $\epsilon \ll 1$. Notice that
\begin{equation}
\E{k}{}^{\dagger} \hat{\mathcal{V}}_k(\varphi) \E{k} = \int_{-A+k}^{A+k} e^{-i \varphi \Delta_k(x)} \op{x}{x} dx.
\end{equation}
Now, expand the probe state in the spatial basis, yielding an integral rather than a sum; $\ket{\psi} = \int_{-A+k}^{A+k} c(x) \ket{x} dx$.
This means
\begin{equation}
\begin{aligned}
    \big|\langle\Psi|\tilde{\Psi}\rangle\big|^2
    &= \Big| \int_{-A+k}^{A+k} |c(x)|^2 \mathrm{e}^{-i\Delta_k(n) \varphi }\Big|^2 dx.
\end{aligned}
\end{equation}
Then using the expansion of the exponential to find
\begin{equation}
\begin{aligned}
 \big|\langle\Psi|\tilde{\Psi}\rangle\big|^2 
    &\ge 1 -\int_{-A+k}^{A+k} |c(x)|^2 [ \varphi \Delta_k(x)] ^2  -\mathcal{O}(\varphi^4) dx \\
    &\ge 1 - \epsilon - \mathcal{O}(\epsilon^2).
\end{aligned}
\end{equation}
Here, we used the assumption that $|\langle \Psi | \Psi \rangle|^2 = \sum_{n=m}^M |c(x)|^2 = 1$, then dropped all strictly positive terms, and lastly used $[ \varphi \Delta_k(x)] ^2 \leq \epsilon$ by assumption.
Therefore, if $\varphi$ is sufficiently small then $1 \geq |\langle \Psi | \tilde{\Psi} \rangle|^2 \geq 1 - \mathcal{O}(\epsilon)$, and thus $|\langle \Psi | \tilde{\Psi} \rangle|^2 = 1 - \mathcal{O}(\epsilon)$, independent of $\ket{\psi}$. $\blacksquare$

From \cref{eq:go_no_go} we see a scaling law: non-linear metrology may be useful so long as the phase, $\varphi$, is smaller than
\begin{equation}
|\varphi| < |\varphi_c| \sim \gamma \frac{\sqrt{\epsilon}}{ | k\ g'(n)|},
\end{equation}
for all basis states $\ket{n}$ in the support of the probe state, and where $\gamma$ depends on the specific probe state. 
In the worst case scenario, proved by the Go-No-Go-Law, $\gamma \sim 1$.

\section{Perturbative QFI Expansion}
When we have that
\begin{equation}
|\langle \Psi | {\Upsilon} \rangle|^2 = 1 - O( \epsilon)
\end{equation}
We can expand the state
\begin{equation}
\ket{\Upsilon} \approx \sqrt{1-\epsilon}\ket{\Psi} - \sqrt{\epsilon} \ket{\perp}
\end{equation}
where $|\perp\rangle$ can be obtained from a Gram Schmidt procedure so that
\begin{equation}
\langle \perp | \Psi \rangle = 0.
\end{equation}

From this expansion, we can perturbatively compute the QFI, where we assume that the state $\ket{\perp}$ carries no usable information about $\varphi$, i.e. we take $ \frac{\partial}{\partial \varphi} \ket{\perp} = 0$, when we calculate the QFI.
From here on out, we will use the shorthand $\partial_\varphi = \frac{\partial}{\partial \varphi}$.

Here, we will calculate the QFI using the derivative definition:
\begin{equation}
\begin{aligned}
I( \varphi| \Upsilon) =& 4 \langle \partial_\varphi \Upsilon | \partial_\varphi \Upsilon \rangle - 4 \langle \partial_\varphi \Upsilon | \Upsilon \rangle \langle\Upsilon | \partial_\varphi  \Upsilon \rangle
\end{aligned}
\end{equation}
where, using the approximation,
\begin{equation}
| \partial_\varphi  \Upsilon \rangle = \sqrt{1 - \epsilon} \ket{\partial_\varphi \Psi }.
\end{equation}
This means
\begin{equation}
\begin{aligned}
I( \varphi| \Upsilon) =& 4 (1-\epsilon) \langle \partial_\varphi \Psi | \partial_\varphi \Psi \rangle - 4 (1-\epsilon) \langle \partial_\varphi \Psi | \Upsilon \rangle \langle\Upsilon | \partial_\varphi  \Psi \rangle \\
=& 4 \langle \partial_\varphi \Psi | \partial_\varphi \Psi \rangle - 4 \langle \partial_\varphi \Psi | \Upsilon \rangle \langle\Upsilon | \partial_\varphi  \Psi \rangle + \mathcal{O}(\epsilon) \\
=& 4 \langle \partial_\varphi \Psi | \partial_\varphi \Psi \rangle - 4 \langle \partial_\varphi \Psi | \Psi \rangle \langle \Psi | \partial_\varphi  \Psi \rangle \\
& - 4 \sqrt{\epsilon} \sqrt{1-\epsilon} \langle \partial_\varphi \Psi | \Psi \rangle \langle\perp | \partial_\varphi  \Psi \rangle - 4 \sqrt{\epsilon} \sqrt{1-\epsilon} \langle \partial_\varphi \Psi | \perp \rangle \langle \Psi | \partial_\varphi  \Psi \rangle + \mathcal{O}(\epsilon) \\
=& 4 \langle \partial_\varphi \Psi | \partial_\varphi \Psi \rangle - 4 \langle \partial_\varphi \Psi | \Psi \rangle \langle \Psi | \partial_\varphi  \Psi \rangle + \mathcal{O}(\sqrt{\epsilon})
\end{aligned}
\end{equation}
whereupon we can identify this first term as $I(\varphi|\Psi)$.
Now, we note that
\begin{equation}
\ket{\Psi} = \E{k} U_g(\varphi) \ket{\psi},
\end{equation}
where, so long as $\E{k}$ does not decrease the norm of the state, we can use the fact that any actions taken after the unitary encoding of $\varphi$ do not modify the QFI with respect to $\varphi$.
Therefore, $I(\varphi|\Psi) = I(\varphi|\hat G,\psi)$ and we can conclude that
\begin{equation}
\big| I\big( \varphi| \hat G, \psi \big) - I\big( \varphi|\hat G,\Upsilon \big) \big| \approx\mathcal{O}(\sqrt{\epsilon})\,.
\end{equation}

\section{Heralding The Shift Dependence}

As mentioned in the main text, one could consider heralding the $k$ dependence and take that into account.
Each independent trial that culminates in a measurement will have a different value of $k$ which, in principle, depends on the error channel itself.
This means that modifications to the quantum Cram\'er Rao bound (and therefore the asymptotic sensitivity) depend on the specific error channel being considered.
However, we can still calculate the QFI for one of these trials in which $k$ has been heralded.
Here, we will do this two ways.

Throughout the main results, we assume that the error occurs prior to the beginning of the unitary encoding.
We can do this without a loss of generality because we may commute all errors from the state preparation, encoding, and measurement to just before the ideal encoding step. 
We subsequently make the simplifying assumption that $\Ep{k}{\Phi} \rightarrow \E{k}$ with $\Phi = 0$.
However, this phase error contains the accumulated phase from pulling the error through the phase encoding, initially.

First, we will continue to ignore this initially accumulated phase and calculate the QFI in the case that one the heralds $k$ dependence.
Second, we will include the fact that the error might have occurred \textit{during} the phase encoding, and show that the time at which the error occurs becomes a nuisance parameter and turns the problem into a multi-parameter estimation problem.

\subsection{Errors Prior to Encoding}
Calculating the QFI for the simple case, in which we continue operating under the simplifying assumption that the error occurs before the unitary encoding step , means we must calculate the QFI for the state
\begin{equation}
\ket{\Upsilon} = \hat{V}_k(\varphi) \E{k} \hat{U}_g(\varphi) \ket{\psi}
\end{equation}
where we know the value of $k$ and wish to estimate $\varphi$.
First, we can make use of the fact that any actions we take after the value of $\varphi$ has been encoded onto the state will not change the QFI.
This means we can instead calculate the QFI for \begin{equation}
\begin{aligned}
\ket{\tilde{\Upsilon}} &= \E{k}^\dagger \hat{V}_k(\varphi) \E{k} \hat{U}_g(\varphi) \ket{\psi} \\
&= \left( \sum_{n=m}^\infty \mathrm{e}^{-i g(n+k) \varphi } \op{n}{n} \right)\ket{\psi}
\end{aligned}
\end{equation}
for $m = \mathrm{max}(-k,0)$.
From this, we can conclude that the QFI is
\begin{equation}
\begin{aligned} \label{eq:shiftedQFI}
I(\varphi|g(\hat{n}),\Upsilon) =& 4 \  \var_\psi(g(\hat{n}+k)),
\end{aligned}
\end{equation}
where $\var_\psi(\hat{A}) = \langle \psi| \hat{A}^2 | \psi \rangle - \langle \psi| \hat{A} | \psi \rangle^2$ is the variance.
This is the same QFI, but now for the generator shifted by $k$, which is expected. When using the optimal input state, i.e. $\ket{\psi_{\rm opt}} \equiv (\ket{n_{\rm min}}+\ket{n_{\rm max}})/\sqrt{2}$ where   $n_{\rm max}> n_{\rm min}$, we can directly evaluate~\cref{eq:shiftedQFI} to find  $\propto [g(n_{\rm max}+k)-g(n_{\rm min}+k)]^2$.

\newpage

\subsection{Errors During Encoding}

Since errors accumulate continuously in time, to herald the $k$ dependence we must account for the fact that the time at which the error occurs matters.
If we did this for a specific error channel, this would be accounted for by $\Phi$ in the error basis, with $\Ep{k}{\Phi}$.
Here, we can explore this dependence in general with the simple circuit model:
\begin{equation}
\resizebox{0.5\textwidth}{!}{
\raisebox{0.25em}{
\begin{quantikz}[row sep=0.295cm,column sep=0.2cm]
    & \gate[][0.7cm][0.7cm]{\hat{U}_g(\varphi-\theta)} & \gate[][0.7cm][0.7cm]{\E{k}} & \gate[][0.7cm][0.7cm]{\hat{U}_g(\theta)}& \qw
\end{quantikz}
\hspace{-0.1 em} \raisebox{-0.25em}{=}
\begin{quantikz}[row sep=0.295cm,column sep=0.2cm]
    & \gate[][0.7cm][0.7cm]{\hat{U}_g(\varphi) } & \gate{\E{k}} & \gate{\hat{V}_k(\theta)} & \qw
\end{quantikz}
},\tag{7a}
}
\end{equation}
where $\theta$ is a parameter between zero and $\varphi$ such that $0 \leq |\theta| \leq |\varphi|$.
In words, $\theta$ captures the dependence on the fact that the error might have occurred \textit{during} the unitary phase encoding.
Here, $\hat{V}_k(\theta) = \exp(-i \theta \left[ g(\hat{n}) - g(\hat{n} - k) \right] )$ is exactly the same.

This means we want to calculate the QFI matrix for $\varphi$ and $\theta$ and the state
\begin{equation}
\ket{\psi( \varphi,\theta) } =  \hat{V}_k(\theta) \E{k} \hat{U}_g(\varphi) \ket{\psi}
\end{equation}
Now, we want to calculate the QFI matrix:
\begin{equation}
\mathbf{I} = \begin{pmatrix}
 I_{\varphi,\varphi} & I_{\varphi,\theta} \\
 I_{\varphi,\theta} & I_{\theta,\theta}
\end{pmatrix}
\end{equation}
where $I_{\varphi,\varphi}$ and $I_{\theta,\theta}$ are the QFI's for each parameter, while $I_{\varphi,\theta}$ are the correlations between the parameters.
The quantum Cram\'er Rao bound is then
\begin{equation}
\Delta \phi^2 \geq \left[ \nu \left( I_{\varphi,\varphi} - \frac{ I_{\varphi,\theta}^2 }{ I_{\theta,\theta} } \right) \right]^{-1} .
\end{equation}
where, basically, $ I_{\varphi,\theta}^2 / I_{\theta,\theta} $ accounts for the correlations which degrade the QFI with respect to $\varphi$.

Using the same principle that any actions we take after the value of $\varphi$ has been encoded onto the state will not change the QFI, we know that 
\begin{equation}
I_{\varphi,\varphi} = 4 \var_\psi(g(\hat{n}))
\end{equation}

For the QFI with respect to $\theta$, however, we'll let $\ket{p} = \E{k} \hat{U}_g(\varphi) | \psi \rangle $ for brevity, so that
\begin{equation}
\begin{aligned}
I_{\theta,\theta} =& 4 \var_p( g(\hat{n}) - g(\hat{n} - k) )
\end{aligned}
\end{equation}
where we will use that
\begin{equation}
\bra{p} \left( g(\hat{n}) - g(\hat{n} - k) \right)^a \ket{p} = \bra{\psi} \left( g(\hat{n}+k) - g(\hat{n}) \right) ^a \ket{\psi},
\end{equation}
for $a = 1,2$, to conclude
\begin{equation}
\begin{aligned}
I_{\theta,\theta} =& 4 \var_\psi( g(\hat{n}+k) - g(\hat{n}) )
\end{aligned}
\end{equation}
From these two, or a direct calculation using the symmetric logarithmic derivative, we may also find that
\begin{equation}
I_{\theta,\varphi} = 4 \mathrm{cov}_\psi( g(\hat{n}), g(\hat{n}+k) - g(\hat{n}) )
\end{equation}
where
\begin{equation}
\begin{aligned}
\mathrm{cov}_\psi( \hat{A}, \hat{B} ) = \langle \psi| \frac{ \hat{A} \hat{B} + \hat{B} \hat{A} }{2} | \psi \rangle - \langle \psi| \hat{A} | \psi \rangle \langle \psi| \hat{B} | \psi \rangle. 
\end{aligned}
\end{equation}
We can interpret $I_{\varphi,\varphi} - \frac{ I_{\varphi,\theta}^2 }{ I_{\theta,\theta} } = J(\varphi|\psi)$ as the effective QFI in the presence of the nuisance parameter $\theta$.
Notably,
\begin{equation}
J(\varphi|\psi) \leq I(\varphi|g(\hat{n}), \psi) .
\end{equation}

\begin{figure}[h]
    \centering
    \includegraphics[width=0.55\linewidth]{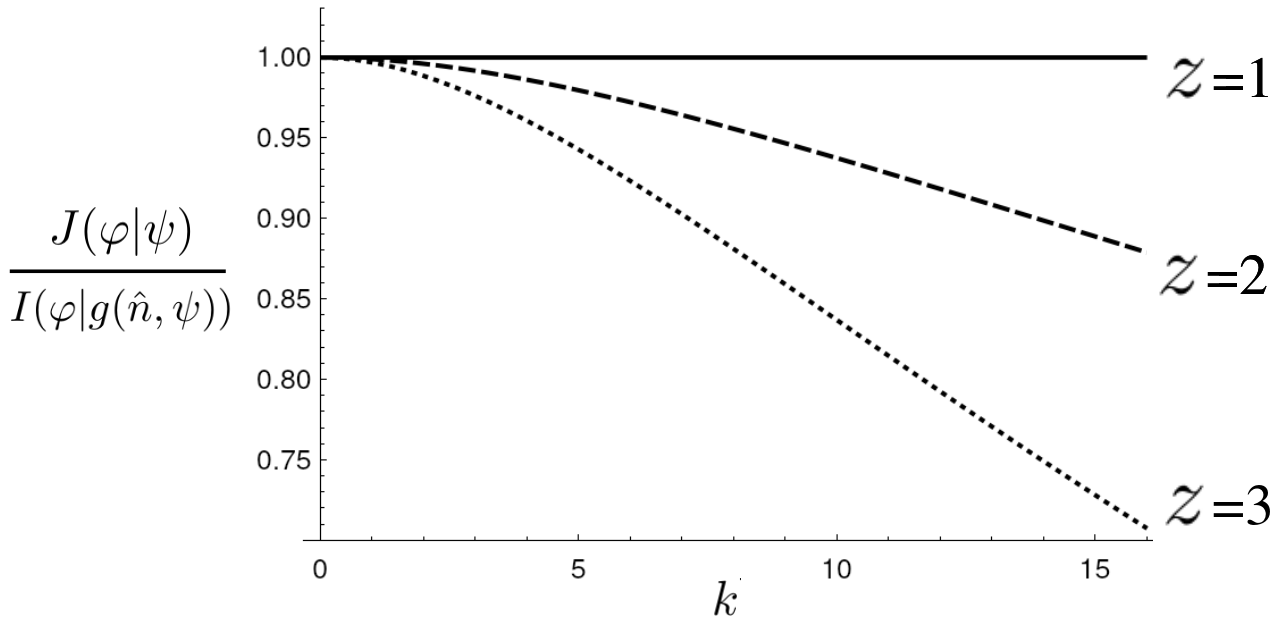}
    \caption{The degradation of the QFI with an M-N state for $M = 10$, $N=50$.}
    \label{fig:Fig1}
\end{figure}

For $\ket{\psi} = \frac{1}{\sqrt{2}} ( \ket{M} + \ket{N} )$ in Fock space, i.e. an $M-N$ state, we find that
\begin{equation}
J(\varphi|\psi) = \frac{ (g[N] - g[M])^3 }{ ( g[N + k] - g[M+k] )^2 } \ (g[N] - 2 g[N + k] - g[M] + 2 g[M+k])
\end{equation}
where we can use $g(\hat{n}) = \hat{n}^z$ to construct~\cref{fig:Fig1}.

\section{Interpreting critical parameter value with error correction. }

Here we consider a phase encoding where the parameter being encoded is $\phi = \omega T$ and $U_g(\phi) = \int_0^T \exp(-i\omega t g(\hat{n})) $. We assume that the probe state is error corrected with error correction steps occurring after intervals of length $\Delta t$. For simplicity we will consider the case where error correction happens instantly, or alternatively where error correction steps are fast enough that negligible phase is encoded during them, and errors do not occur during them. We can now decompose the action of our unitary into each interval of length $\Delta t$ 
    \begin{equation}
        U_g^{(i)}(\phi) = \int_{t_i}^{t_i + \Delta t} \exp(-i\omega t g(\hat{n}))
    \end{equation}
    where the superscript $i$ denotes the $i^{th}$ interval. We can now consider a shift error $\mathbb{E}_k(0)$ occurring during the $i^{th}$ interval at some time $t_e\in (t_i ,t_i+\Delta t)$. We know that this will induce an effective phase error due to the non-linearity, but after the shift is corrected the phase error will not grow in size. The worst case occurs when the error occurs at the beginning of the interval $t_e = t_i$. In this case, phase will be applied to the error state for a time $\Delta t$ before the state is corrected. This results in a phase error $\hat{V}_k(\varphi = \omega\Delta t)$.

    These errors can be mitigated either by reducing the time between the error correcting steps $\Delta t$, or by reducing the physical error rate. Of course, doing this is highly nontrivial because the actual error-correcting steps often take significant time to perform. As the percent of time spent error-correcting increases, there will be a constant factor reduction in the scaling of the phase encoding. So long as this constant factor is not large enough to undue the improved scaling, then the nonlinearity may still be useful for metrology. 

\section{Explicit Error Compensation}

\begin{figure}[h!]
  \centering
  \includegraphics[width=\textwidth]{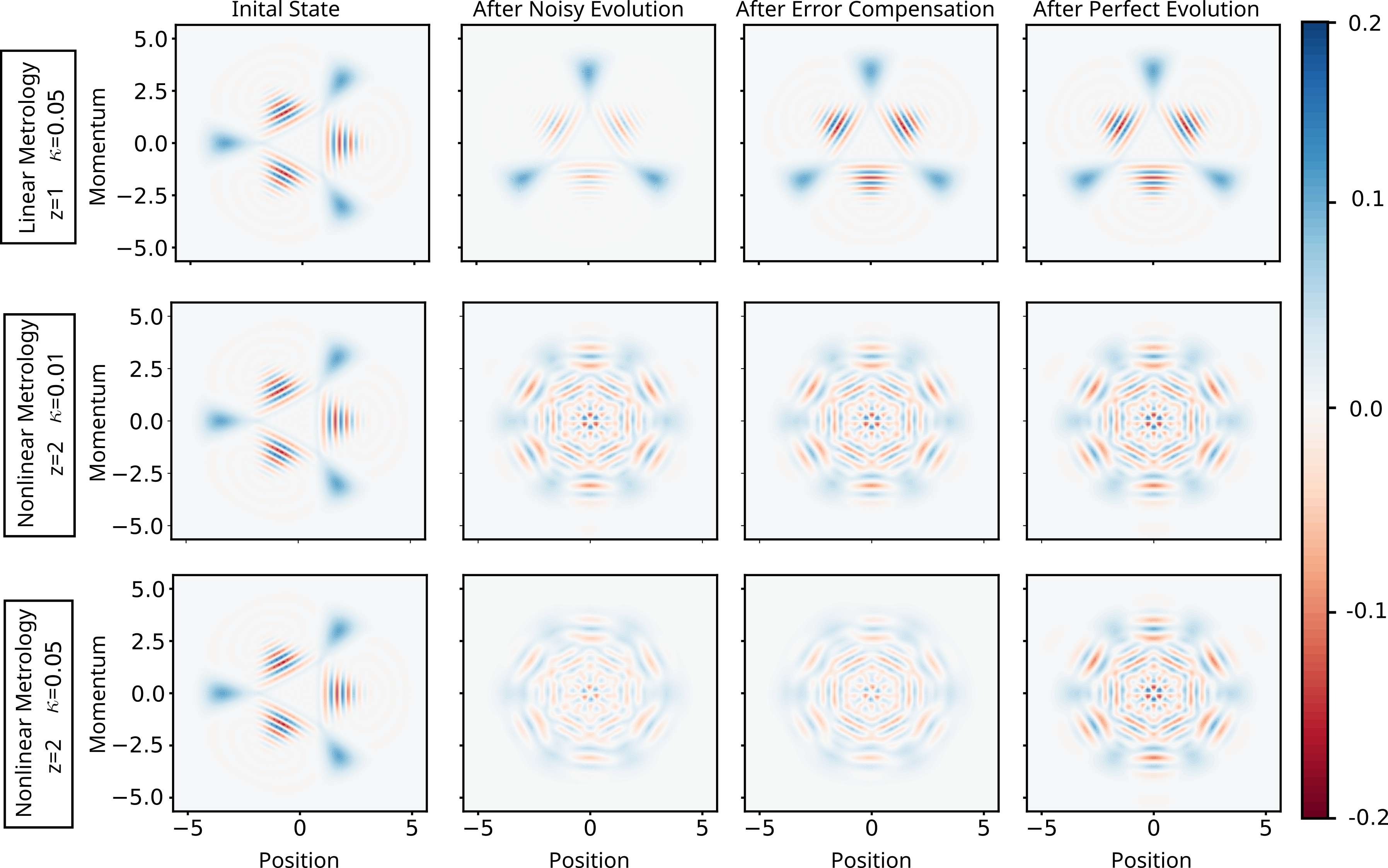}
  \caption{Wigner function of a binomial encoding through various steps of noisy linear metrology in the top panel and noisy nonlinear metrology in the bottom panel. The first column presents the initial probe state prepared in the minus state of a binomial code with a three fold rotation symmetry and an average photon number of 12. This probe state is evolved via a Lindblad master equation approach via a hamiltonian $\omega\hat{n}$ ($\omega \hat{n}^2$) in the top (bottom) panel. Here, $t = 1$ is fixed and $\omega t = \varphi$, where $\omega$ is varied to sweep values of $\varphi$. The master equation also includes photon loss with a unitless loss rate $\kappa$. The state after this evolution is pictured in the second column. In the third column we apply a photon number stabilizer and a recovery shift to compensate for any photon loss errors~\cite{Marinoff2024}. We can compare this state to the state in the fourth column which evolved under the same hamiltonian, but with no loss, $\kappa = 0$. Row one is the case of linear metrology, with significant loss. Correcting the loss errors is sufficient to reproduce the perfect state with high fidelity. Row two shows nonlinear case with low loss such that the state is minimally perturbed. The third row shows the nonlinear case with significant loss such that the state can not be recovered with hight fidelity.    }
  \label{fig:Correcting code}
\end{figure}

We have considered the case where the probe state suffers an explicit shift error $\mathbb{E}_k$ and is able to compensate for this error exactly. Explicitly, we assume that the probe state is not in the kernel of the error to ensure that $\mathbb{E}_k^\dagger$ acts as the recovery for our probe state. In this simple case, we have omitted several details such as the specifics of the error channel, how the probe state is encoded, and what stabilizers are measured. 

In this section, we implement error-correcting code in a bosonic oscillator system. We then subject the encoded state to a free evolution under a nonlinear Hamiltonian $\hat H$ with Linbladian loss given by some loss rate $\kappa$. That is we evolve under the master equation
\begin{equation}\label{eq:lossme}
    \dot {\hat{\rho}} = -i [\hat H,\hat \rho] +\kappa \mathcal D[\hat a] \hat\rho ,
\end{equation}
integrated up to some unitless time $\kappa t$,
with $\mathcal D[\hat L] \hat\rho = \hat L \hat\rho \hat L\dg - \half \hat L\dg \hat L \hat\rho - \half \hat\rho \hat L\dg \hat L$. After applying a set amount of phase $\varphi$ over a fixed evolution time, we apply an error recovery that only recovers photon number shift errors and leaves phase errors untouched. In this way, we ensure that we are not accidentally correcting a phase $\varphi$ that we would like to sense.

We choose a bosonic code known as a binomial code \cite{Marinoff2024} that has a discrete rotation symmetry. This is a natural choice because binomial codes have been shown to be effective against loss errors and because there is a natural way to decouple the photon number and phase recoveries. We make no claims about the performance of this encoding for estimating phase for metrological purposes; instead we want to see the emergence of an uncorrected phase error under nonlinear evolution. To see how this error correcting code works we plot the wigner function of the plus state through various steps of noisy metrology in \cref{fig:Correcting code}. There we see that the photon number recovery is sufficient to correct errors in the linear case, but is unable to recover the state completely in the nonlinear case. 

We will simultaneously apply the Hamiltonian evolution $U(\phi)$ and loss for different functions $g(\hat{n}) = \hat{n}^z$. We will relax the assumption that the probe state must not be in the kernel of the error, for simplicity. This will reduce the fidelity for all values of nonlinearity $z$. We will choose the loss rate $\kappa$ to be low enough that the linear case $z=1$ still has fidelity near one. We can then calculate the fidelity between the state that experienced the loss and recovery versus the state that never experienced the loss. We plot the results of this calculation in \cref{fig:correctedfidelity}. This reveals that the fidelity is near one for $z=1$ and quickly drops for $z>1$ even for relatively small probe states.

\begin{figure}[h]
  \centering
  \includegraphics[width=0.5 \textwidth]{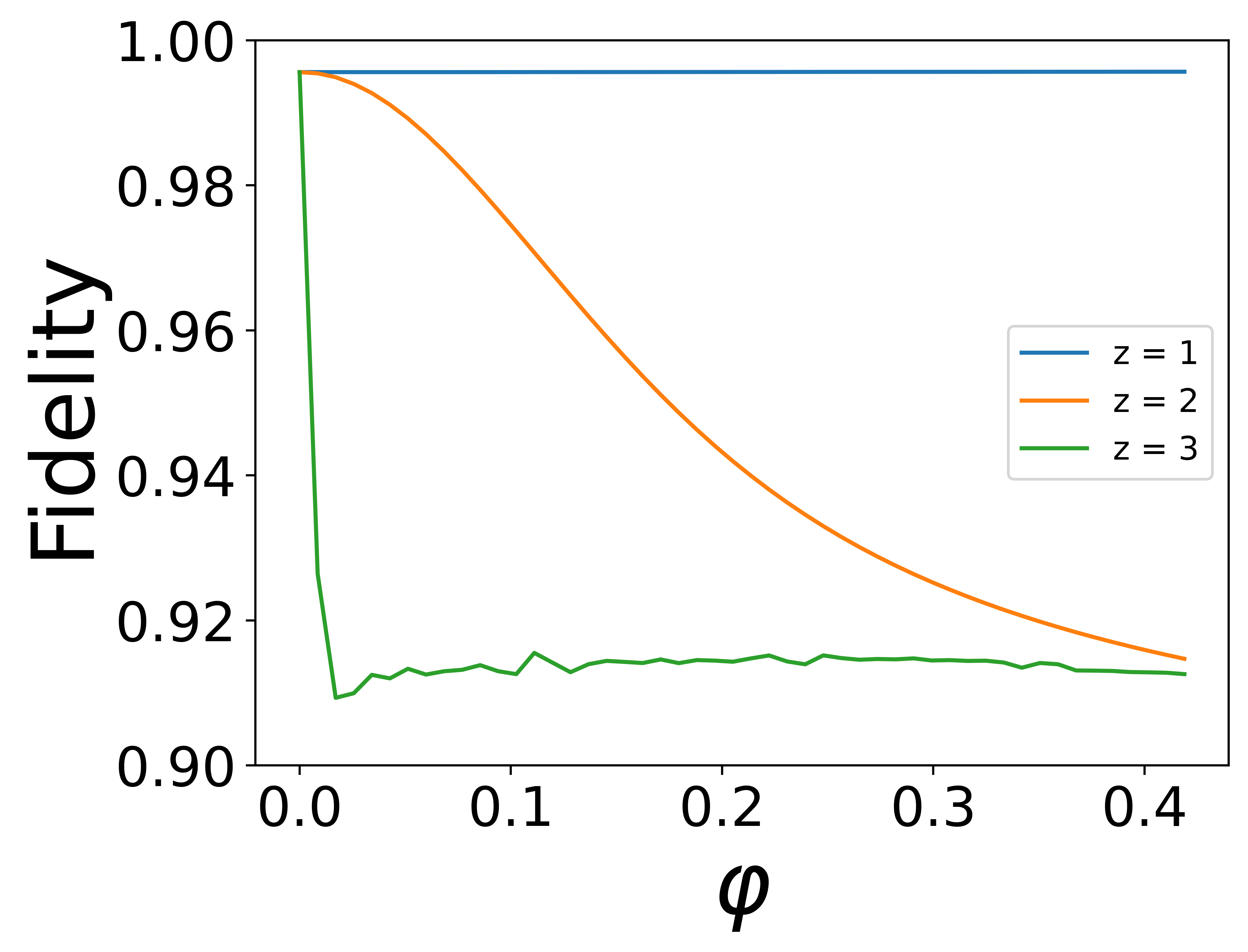}
  \caption{The fidelity of a probe that underwent noisy evolution followed by error correction with a probe that never encountered noise is plotted as a function of phase encoded. The same initial probe and loss channel were used as in \cref{fig:Correcting code}. For linear metrology, in blue, we see that the fidelity is $\approx .99$ regardless of phase. If the error correcting code could perfectly correct loss, then this fidelity would be one. For higher nonlinear powers we see increasingly sharp cutoffs with encoded phase $\varphi$. These fidelities asymptote to $\approx .9$ which closely matches with the $\approx 89\%$ chance that no photons are lost given our noise channel. This supports the assertion that beyond the critical phase the fidelities of an state with a shift error quickly drops to zero.  }
  \label{fig:correctedfidelity}
\end{figure}

\section{Errors on Specific States}

Lastly, we will explicitly evaluate the fidelity which we use as a figure of merit.
This is given by
\begin{equation}
    F(k,g,\psi) = |\langle\psi|\hat{V}_k(\varphi)|\psi\rangle|^2,
\end{equation}
which is Eq. (14) in the main text.
Here we show the results for a $0-N$ state, a coherent state, and a cat state.
The results are shown in~\cref{tab:summary}, where we also list the explicit form of the un-normalized state, and the QFI in the absence of errors.
We leave the normalization off in the table to improve legibility.

\begin{table*}[h]
\begin{tabular}{l|l|l|l}
State                  & QFI & $F(k,g,\psi)$ \\ \hline
$\ket{0}+\ket{N} \quad $      & $N^z$    & $ \displaystyle{ \frac{1}{2}\left(\cos[\theta(g(N+k)-g(N)-g(k))]+1\right) } $ \\ \hline
$\ket{\alpha}$         & $\ \displaystyle {4 z^2|\alpha|^{4z-2} + O(|\alpha|^{4z-4})} \quad  $    &  $ \displaystyle {\Big|\mathrm{e}^{-|\alpha|^2}\sum_n \frac{|\alpha|^{2n}}{n!}\mathrm{e}^{-i\theta( g(n+k)-g(n))}\Big|^2 }$                  \\ \hline
$\ket{0}+\ket{\alpha}$ &     $ 2 z^2|\alpha|^{4z-2} + O(|\alpha|^{4z-4}) $ &  $\displaystyle{ \frac{1}{4}\Big|\mathrm{e}^{-|\alpha|^2}\sum_n \frac{|\alpha|^{2n}}{n!}\mathrm{e}^{-i\theta( g(n+k)-g(n))} + \mathrm{e}^{i\theta g(k)}(2\mathrm{e}^{-|\alpha|^2/2}+1)\Big|^2 } \quad  $ 
\end{tabular}
\caption{A table of three common states, their QFI, and the fidelity of the emergent phase error}\label{tab:summary}
\end{table*}

\end{widetext}


\begin{thebibliography}{53}%
\makeatletter
\providecommand \@ifxundefined [1]{%
 \@ifx{#1\undefined}
}%
\providecommand \@ifnum [1]{%
 \ifnum #1\expandafter \@firstoftwo
 \else \expandafter \@secondoftwo
 \fi
}%
\providecommand \@ifx [1]{%
 \ifx #1\expandafter \@firstoftwo
 \else \expandafter \@secondoftwo
 \fi
}%
\providecommand \natexlab [1]{#1}%
\providecommand \enquote  [1]{``#1''}%
\providecommand \bibnamefont  [1]{#1}%
\providecommand \bibfnamefont [1]{#1}%
\providecommand \citenamefont [1]{#1}%
\providecommand \href@noop [0]{\@secondoftwo}%
\providecommand \href [0]{\begingroup \@sanitize@url \@href}%
\providecommand \@href[1]{\@@startlink{#1}\@@href}%
\providecommand \@@href[1]{\endgroup#1\@@endlink}%
\providecommand \@sanitize@url [0]{\catcode `\\12\catcode `\$12\catcode
  `\&12\catcode `\#12\catcode `\^12\catcode `\_12\catcode `\%12\relax}%
\providecommand \@@startlink[1]{}%
\providecommand \@@endlink[0]{}%
\providecommand \url  [0]{\begingroup\@sanitize@url \@url }%
\providecommand \@url [1]{\endgroup\@href {#1}{\urlprefix }}%
\providecommand \urlprefix  [0]{URL }%
\providecommand \Eprint [0]{\href }%
\providecommand \doibase [0]{https://doi.org/}%
\providecommand \selectlanguage [0]{\@gobble}%
\providecommand \bibinfo  [0]{\@secondoftwo}%
\providecommand \bibfield  [0]{\@secondoftwo}%
\providecommand \translation [1]{[#1]}%
\providecommand \BibitemOpen [0]{}%
\providecommand \bibitemStop [0]{}%
\providecommand \bibitemNoStop [0]{.\EOS\space}%
\providecommand \EOS [0]{\spacefactor3000\relax}%
\providecommand \BibitemShut  [1]{\csname bibitem#1\endcsname}%
\let\auto@bib@innerbib\@empty
\bibitem [{\citenamefont {Roy}\ and\ \citenamefont
  {Braunstein}(2008)}]{Roy2008}%
  \BibitemOpen
  \bibfield  {author} {\bibinfo {author} {\bibfnamefont {S.~M.}\ \bibnamefont
  {Roy}}\ and\ \bibinfo {author} {\bibfnamefont {S.~L.}\ \bibnamefont
  {Braunstein}},\ }\bibfield  {title} {\bibinfo {title} {Exponentially enhanced
  quantum metrology},\ }\href {https://doi.org/10.1103/PhysRevLett.100.220501}
  {\bibfield  {journal} {\bibinfo  {journal} {Phys. Rev. Lett.}\ }\textbf
  {\bibinfo {volume} {100}},\ \bibinfo {pages} {220501} (\bibinfo {year}
  {2008})}\BibitemShut {NoStop}%
\bibitem [{\citenamefont {Caves}(1981)}]{Caves1981}%
  \BibitemOpen
  \bibfield  {author} {\bibinfo {author} {\bibfnamefont {C.~M.}\ \bibnamefont
  {Caves}},\ }\bibfield  {title} {\bibinfo {title} {Quantum-mechanical noise in
  an interferometer},\ }\href {https://doi.org/10.1103/PhysRevD.23.1693}
  {\bibfield  {journal} {\bibinfo  {journal} {Phys. Rev. D}\ }\textbf {\bibinfo
  {volume} {23}},\ \bibinfo {pages} {1693} (\bibinfo {year}
  {1981})}\BibitemShut {NoStop}%
\bibitem [{\citenamefont {Bondurant}\ and\ \citenamefont
  {Shapiro}(1984)}]{Bondurant1984}%
  \BibitemOpen
  \bibfield  {author} {\bibinfo {author} {\bibfnamefont {R.~S.}\ \bibnamefont
  {Bondurant}}\ and\ \bibinfo {author} {\bibfnamefont {J.~H.}\ \bibnamefont
  {Shapiro}},\ }\bibfield  {title} {\bibinfo {title} {Squeezed states in
  phase-sensing interferometers},\ }\href
  {https://doi.org/10.1103/PhysRevD.30.2548} {\bibfield  {journal} {\bibinfo
  {journal} {Phys. Rev. D}\ }\textbf {\bibinfo {volume} {30}},\ \bibinfo
  {pages} {2548} (\bibinfo {year} {1984})}\BibitemShut {NoStop}%
\bibitem [{\citenamefont {Yurke}(1986)}]{Yurke1986a}%
  \BibitemOpen
  \bibfield  {author} {\bibinfo {author} {\bibfnamefont {B.}~\bibnamefont
  {Yurke}},\ }\bibfield  {title} {\bibinfo {title} {Input states for
  enhancement of fermion interferometer sensitivity},\ }\href
  {https://doi.org/10.1103/PhysRevLett.56.1515} {\bibfield  {journal} {\bibinfo
   {journal} {Phys. Rev. Lett.}\ }\textbf {\bibinfo {volume} {56}},\ \bibinfo
  {pages} {1515} (\bibinfo {year} {1986})}\BibitemShut {NoStop}%
\bibitem [{\citenamefont {Yurke}\ \emph {et~al.}(1986)\citenamefont {Yurke},
  \citenamefont {McCall},\ and\ \citenamefont {Klauder}}]{Yurke1986b}%
  \BibitemOpen
  \bibfield  {author} {\bibinfo {author} {\bibfnamefont {B.}~\bibnamefont
  {Yurke}}, \bibinfo {author} {\bibfnamefont {S.~L.}\ \bibnamefont {McCall}},\
  and\ \bibinfo {author} {\bibfnamefont {J.~R.}\ \bibnamefont {Klauder}},\
  }\bibfield  {title} {\bibinfo {title} {\uppercase{SU}(2) and
  \uppercase{SU}(1,1) interferometers},\ }\href
  {https://doi.org/10.1103/PhysRevA.33.4033} {\bibfield  {journal} {\bibinfo
  {journal} {Phys. Rev. A}\ }\textbf {\bibinfo {volume} {33}},\ \bibinfo
  {pages} {4033} (\bibinfo {year} {1986})}\BibitemShut {NoStop}%
\bibitem [{\citenamefont {Kitagawa}\ and\ \citenamefont
  {Ueda}(1991)}]{Kitagawa1991}%
  \BibitemOpen
  \bibfield  {author} {\bibinfo {author} {\bibfnamefont {M.}~\bibnamefont
  {Kitagawa}}\ and\ \bibinfo {author} {\bibfnamefont {M.}~\bibnamefont
  {Ueda}},\ }\bibfield  {title} {\bibinfo {title} {Nonlinear-interferometric
  generation of number-phase-correlated fermion states},\ }\href
  {https://doi.org/10.1103/PhysRevLett.67.1852} {\bibfield  {journal} {\bibinfo
   {journal} {Phys. Rev. Lett.}\ }\textbf {\bibinfo {volume} {67}},\ \bibinfo
  {pages} {1852} (\bibinfo {year} {1991})}\BibitemShut {NoStop}%
\bibitem [{\citenamefont {Wineland}\ \emph {et~al.}(1992)\citenamefont
  {Wineland}, \citenamefont {Bollinger}, \citenamefont {Itano}, \citenamefont
  {Moore},\ and\ \citenamefont {Heinzen}}]{Wineland1992}%
  \BibitemOpen
  \bibfield  {author} {\bibinfo {author} {\bibfnamefont {D.~J.}\ \bibnamefont
  {Wineland}}, \bibinfo {author} {\bibfnamefont {J.~J.}\ \bibnamefont
  {Bollinger}}, \bibinfo {author} {\bibfnamefont {W.~M.}\ \bibnamefont
  {Itano}}, \bibinfo {author} {\bibfnamefont {F.~L.}\ \bibnamefont {Moore}},\
  and\ \bibinfo {author} {\bibfnamefont {D.~J.}\ \bibnamefont {Heinzen}},\
  }\bibfield  {title} {\bibinfo {title} {Spin squeezing and reduced quantum
  noise in spectroscopy},\ }\href {https://doi.org/10.1103/PhysRevA.46.R6797}
  {\bibfield  {journal} {\bibinfo  {journal} {Phys. Rev. A}\ }\textbf {\bibinfo
  {volume} {46}},\ \bibinfo {pages} {R6797} (\bibinfo {year}
  {1992})}\BibitemShut {NoStop}%
\bibitem [{\citenamefont {Kitagawa}\ and\ \citenamefont
  {Ueda}(1993)}]{kitagawa1993}%
  \BibitemOpen
  \bibfield  {author} {\bibinfo {author} {\bibfnamefont {M.}~\bibnamefont
  {Kitagawa}}\ and\ \bibinfo {author} {\bibfnamefont {M.}~\bibnamefont
  {Ueda}},\ }\bibfield  {title} {\bibinfo {title} {Squeezed spin states},\
  }\href {https://doi.org/10.1103/PhysRevA.47.5138} {\bibfield  {journal}
  {\bibinfo  {journal} {Phys. Rev. A}\ }\textbf {\bibinfo {volume} {47}},\
  \bibinfo {pages} {5138} (\bibinfo {year} {1993})}\BibitemShut {NoStop}%
\bibitem [{\citenamefont {Holland}\ and\ \citenamefont
  {Burnett}(1993)}]{Holland1993}%
  \BibitemOpen
  \bibfield  {author} {\bibinfo {author} {\bibfnamefont {M.~J.}\ \bibnamefont
  {Holland}}\ and\ \bibinfo {author} {\bibfnamefont {K.}~\bibnamefont
  {Burnett}},\ }\bibfield  {title} {\bibinfo {title} {Interferometric detection
  of optical phase shifts at the heisenberg limit},\ }\href
  {https://doi.org/10.1103/PhysRevLett.71.1355} {\bibfield  {journal} {\bibinfo
   {journal} {Phys. Rev. Lett.}\ }\textbf {\bibinfo {volume} {71}},\ \bibinfo
  {pages} {1355} (\bibinfo {year} {1993})}\BibitemShut {NoStop}%
\bibitem [{\citenamefont {Aasi}\ \emph {et~al.}(2013)\citenamefont {Aasi},
  \citenamefont {Abadie}, \citenamefont {Abbott}, \citenamefont {Abbott},
  \citenamefont {Abbott}, \citenamefont {Abernathy}, \citenamefont {Adams},
  \citenamefont {Adams}, \citenamefont {Addesso}, \citenamefont {Adhikari}
  \emph {et~al.}}]{aasi2013}%
  \BibitemOpen
  \bibfield  {author} {\bibinfo {author} {\bibfnamefont {J.}~\bibnamefont
  {Aasi}}, \bibinfo {author} {\bibfnamefont {J.}~\bibnamefont {Abadie}},
  \bibinfo {author} {\bibfnamefont {B.}~\bibnamefont {Abbott}}, \bibinfo
  {author} {\bibfnamefont {R.}~\bibnamefont {Abbott}}, \bibinfo {author}
  {\bibfnamefont {T.}~\bibnamefont {Abbott}}, \bibinfo {author} {\bibfnamefont
  {M.}~\bibnamefont {Abernathy}}, \bibinfo {author} {\bibfnamefont
  {C.}~\bibnamefont {Adams}}, \bibinfo {author} {\bibfnamefont
  {T.}~\bibnamefont {Adams}}, \bibinfo {author} {\bibfnamefont
  {P.}~\bibnamefont {Addesso}}, \bibinfo {author} {\bibfnamefont
  {R.}~\bibnamefont {Adhikari}}, \emph {et~al.},\ }\bibfield  {title} {\bibinfo
  {title} {Enhanced sensitivity of the ligo gravitational wave detector by
  using squeezed states of light},\ }\href
  {https://doi.org/10.1038/nphoton.2013.177} {\bibfield  {journal} {\bibinfo
  {journal} {Nature Photonics}\ }\textbf {\bibinfo {volume} {7}},\ \bibinfo
  {pages} {613} (\bibinfo {year} {2013})}\BibitemShut {NoStop}%
\bibitem [{\citenamefont {Tse}\ \emph {et~al.}(2019)\citenamefont {Tse} \emph
  {et~al.}}]{tse2019}%
  \BibitemOpen
  \bibfield  {author} {\bibinfo {author} {\bibfnamefont {M.}~\bibnamefont
  {Tse}} \emph {et~al.},\ }\bibfield  {title} {\bibinfo {title}
  {Quantum-enhanced advanced ligo detectors in the era of gravitational-wave
  astronomy},\ }\href {https://doi.org/10.1103/PhysRevLett.123.231107}
  {\bibfield  {journal} {\bibinfo  {journal} {Phys. Rev. Lett.}\ }\textbf
  {\bibinfo {volume} {123}},\ \bibinfo {pages} {231107} (\bibinfo {year}
  {2019})}\BibitemShut {NoStop}%
\bibitem [{\citenamefont {Pedrozo-Pe{\~n}afiel}\ \emph
  {et~al.}(2020)\citenamefont {Pedrozo-Pe{\~n}afiel}, \citenamefont {Colombo},
  \citenamefont {Shu}, \citenamefont {Adiyatullin}, \citenamefont {Li},
  \citenamefont {Mendez}, \citenamefont {Braverman}, \citenamefont {Kawasaki},
  \citenamefont {Akamatsu}, \citenamefont {Xiao} \emph {et~al.}}]{vuletic2020}%
  \BibitemOpen
  \bibfield  {author} {\bibinfo {author} {\bibfnamefont {E.}~\bibnamefont
  {Pedrozo-Pe{\~n}afiel}}, \bibinfo {author} {\bibfnamefont {S.}~\bibnamefont
  {Colombo}}, \bibinfo {author} {\bibfnamefont {C.}~\bibnamefont {Shu}},
  \bibinfo {author} {\bibfnamefont {A.~F.}\ \bibnamefont {Adiyatullin}},
  \bibinfo {author} {\bibfnamefont {Z.}~\bibnamefont {Li}}, \bibinfo {author}
  {\bibfnamefont {E.}~\bibnamefont {Mendez}}, \bibinfo {author} {\bibfnamefont
  {B.}~\bibnamefont {Braverman}}, \bibinfo {author} {\bibfnamefont
  {A.}~\bibnamefont {Kawasaki}}, \bibinfo {author} {\bibfnamefont
  {D.}~\bibnamefont {Akamatsu}}, \bibinfo {author} {\bibfnamefont
  {Y.}~\bibnamefont {Xiao}}, \emph {et~al.},\ }\bibfield  {title} {\bibinfo
  {title} {Entanglement on an optical atomic-clock transition},\ }\href
  {https://doi.org/10.1038/s41586-020-3006-1} {\bibfield  {journal} {\bibinfo
  {journal} {Nature}\ }\textbf {\bibinfo {volume} {588}},\ \bibinfo {pages}
  {414} (\bibinfo {year} {2020})}\BibitemShut {NoStop}%
\bibitem [{\citenamefont {Degen}\ \emph {et~al.}(2017)\citenamefont {Degen},
  \citenamefont {Reinhard},\ and\ \citenamefont {Cappellaro}}]{Degen_RMP_2017}%
  \BibitemOpen
  \bibfield  {author} {\bibinfo {author} {\bibfnamefont {C.~L.}\ \bibnamefont
  {Degen}}, \bibinfo {author} {\bibfnamefont {F.}~\bibnamefont {Reinhard}},\
  and\ \bibinfo {author} {\bibfnamefont {P.}~\bibnamefont {Cappellaro}},\
  }\bibfield  {title} {\bibinfo {title} {Quantum sensing},\ }\href
  {https://doi.org/10.1103/RevModPhys.89.035002} {\bibfield  {journal}
  {\bibinfo  {journal} {Rev. Mod. Phys.}\ }\textbf {\bibinfo {volume} {89}},\
  \bibinfo {pages} {035002} (\bibinfo {year} {2017})}\BibitemShut {NoStop}%
\bibitem [{\citenamefont {Caves}\ \emph {et~al.}(1980)\citenamefont {Caves},
  \citenamefont {Thorne}, \citenamefont {Drever}, \citenamefont {Sandberg},\
  and\ \citenamefont {Zimmermann}}]{Caves_SQL_1980}%
  \BibitemOpen
  \bibfield  {author} {\bibinfo {author} {\bibfnamefont {C.~M.}\ \bibnamefont
  {Caves}}, \bibinfo {author} {\bibfnamefont {K.~S.}\ \bibnamefont {Thorne}},
  \bibinfo {author} {\bibfnamefont {R.~W.~P.}\ \bibnamefont {Drever}}, \bibinfo
  {author} {\bibfnamefont {V.~D.}\ \bibnamefont {Sandberg}},\ and\ \bibinfo
  {author} {\bibfnamefont {M.}~\bibnamefont {Zimmermann}},\ }\bibfield  {title}
  {\bibinfo {title} {On the measurement of a weak classical force coupled to a
  quantum-mechanical oscillator. i. issues of principle},\ }\href
  {https://doi.org/10.1103/RevModPhys.52.341} {\bibfield  {journal} {\bibinfo
  {journal} {Rev. Mod. Phys.}\ }\textbf {\bibinfo {volume} {52}},\ \bibinfo
  {pages} {341} (\bibinfo {year} {1980})}\BibitemShut {NoStop}%
\bibitem [{\citenamefont {Luis}(2004)}]{Luis2004}%
  \BibitemOpen
  \bibfield  {author} {\bibinfo {author} {\bibfnamefont {A.}~\bibnamefont
  {Luis}},\ }\bibfield  {title} {\bibinfo {title} {Nonlinear transformations
  and the heisenberg limit},\ }\href
  {https://doi.org/https://doi.org/10.1016/j.physleta.2004.06.080} {\bibfield
  {journal} {\bibinfo  {journal} {Physics Letters A}\ }\textbf {\bibinfo
  {volume} {329}},\ \bibinfo {pages} {8} (\bibinfo {year} {2004})}\BibitemShut
  {NoStop}%
\bibitem [{\citenamefont {Beltr\'an}\ and\ \citenamefont
  {Luis}(2005)}]{BeltranLuis2005}%
  \BibitemOpen
  \bibfield  {author} {\bibinfo {author} {\bibfnamefont {J.}~\bibnamefont
  {Beltr\'an}}\ and\ \bibinfo {author} {\bibfnamefont {A.}~\bibnamefont
  {Luis}},\ }\bibfield  {title} {\bibinfo {title} {Breaking the heisenberg
  limit with inefficient detectors},\ }\href
  {https://doi.org/10.1103/PhysRevA.72.045801} {\bibfield  {journal} {\bibinfo
  {journal} {Phys. Rev. A}\ }\textbf {\bibinfo {volume} {72}},\ \bibinfo
  {pages} {045801} (\bibinfo {year} {2005})}\BibitemShut {NoStop}%
\bibitem [{\citenamefont {Boixo}\ \emph {et~al.}(2007)\citenamefont {Boixo},
  \citenamefont {Flammia}, \citenamefont {Caves},\ and\ \citenamefont
  {Geremia}}]{Boixo2007}%
  \BibitemOpen
  \bibfield  {author} {\bibinfo {author} {\bibfnamefont {S.}~\bibnamefont
  {Boixo}}, \bibinfo {author} {\bibfnamefont {S.~T.}\ \bibnamefont {Flammia}},
  \bibinfo {author} {\bibfnamefont {C.~M.}\ \bibnamefont {Caves}},\ and\
  \bibinfo {author} {\bibfnamefont {J.}~\bibnamefont {Geremia}},\ }\bibfield
  {title} {\bibinfo {title} {Generalized limits for single-parameter quantum
  estimation},\ }\href {https://doi.org/10.1103/PhysRevLett.98.090401}
  {\bibfield  {journal} {\bibinfo  {journal} {Phys. Rev. Lett.}\ }\textbf
  {\bibinfo {volume} {98}},\ \bibinfo {pages} {090401} (\bibinfo {year}
  {2007})}\BibitemShut {NoStop}%
\bibitem [{\citenamefont {Boixo}\ \emph {et~al.}(2008)\citenamefont {Boixo},
  \citenamefont {Datta}, \citenamefont {Davis}, \citenamefont {Flammia},
  \citenamefont {Shaji},\ and\ \citenamefont {Caves}}]{Boixo2008}%
  \BibitemOpen
  \bibfield  {author} {\bibinfo {author} {\bibfnamefont {S.}~\bibnamefont
  {Boixo}}, \bibinfo {author} {\bibfnamefont {A.}~\bibnamefont {Datta}},
  \bibinfo {author} {\bibfnamefont {M.~J.}\ \bibnamefont {Davis}}, \bibinfo
  {author} {\bibfnamefont {S.~T.}\ \bibnamefont {Flammia}}, \bibinfo {author}
  {\bibfnamefont {A.}~\bibnamefont {Shaji}},\ and\ \bibinfo {author}
  {\bibfnamefont {C.~M.}\ \bibnamefont {Caves}},\ }\bibfield  {title} {\bibinfo
  {title} {Quantum metrology: Dynamics versus entanglement},\ }\href
  {https://doi.org/10.1103/PhysRevLett.101.040403} {\bibfield  {journal}
  {\bibinfo  {journal} {Phys. Rev. Lett.}\ }\textbf {\bibinfo {volume} {101}},\
  \bibinfo {pages} {040403} (\bibinfo {year} {2008})}\BibitemShut {NoStop}%
\bibitem [{\citenamefont {Rivas}\ and\ \citenamefont
  {Luis}(2010)}]{RivasLuis2010}%
  \BibitemOpen
  \bibfield  {author} {\bibinfo {author} {\bibfnamefont {A.}~\bibnamefont
  {Rivas}}\ and\ \bibinfo {author} {\bibfnamefont {A.}~\bibnamefont {Luis}},\
  }\bibfield  {title} {\bibinfo {title} {Precision quantum metrology and
  nonclassicality in linear and nonlinear detection schemes},\ }\href
  {https://doi.org/10.1103/PhysRevLett.105.010403} {\bibfield  {journal}
  {\bibinfo  {journal} {Phys. Rev. Lett.}\ }\textbf {\bibinfo {volume} {105}},\
  \bibinfo {pages} {010403} (\bibinfo {year} {2010})}\BibitemShut {NoStop}%
\bibitem [{\citenamefont {Napolitano}\ \emph {et~al.}(2011)\citenamefont
  {Napolitano}, \citenamefont {Koschorreck}, \citenamefont {Dubost},
  \citenamefont {Behbood}, \citenamefont {Sewell},\ and\ \citenamefont
  {Mitchell}}]{Napolitano2011}%
  \BibitemOpen
  \bibfield  {author} {\bibinfo {author} {\bibfnamefont {M.}~\bibnamefont
  {Napolitano}}, \bibinfo {author} {\bibfnamefont {M.}~\bibnamefont
  {Koschorreck}}, \bibinfo {author} {\bibfnamefont {B.}~\bibnamefont {Dubost}},
  \bibinfo {author} {\bibfnamefont {N.}~\bibnamefont {Behbood}}, \bibinfo
  {author} {\bibfnamefont {R.}~\bibnamefont {Sewell}},\ and\ \bibinfo {author}
  {\bibfnamefont {M.~W.}\ \bibnamefont {Mitchell}},\ }\bibfield  {title}
  {\bibinfo {title} {Interaction-based quantum metrology showing scaling beyond
  the heisenberg limit},\ }\href {https://doi.org/10.1038/nature09778}
  {\bibfield  {journal} {\bibinfo  {journal} {Nature}\ }\textbf {\bibinfo
  {volume} {471}},\ \bibinfo {pages} {486} (\bibinfo {year}
  {2011})}\BibitemShut {NoStop}%
\bibitem [{\citenamefont {Hall}\ and\ \citenamefont
  {Wiseman}(2012)}]{HallWiseman2012}%
  \BibitemOpen
  \bibfield  {author} {\bibinfo {author} {\bibfnamefont {M.~J.~W.}\
  \bibnamefont {Hall}}\ and\ \bibinfo {author} {\bibfnamefont {H.~M.}\
  \bibnamefont {Wiseman}},\ }\bibfield  {title} {\bibinfo {title} {Does
  nonlinear metrology offer improved resolution? answers from quantum
  information theory},\ }\href {https://doi.org/10.1103/PhysRevX.2.041006}
  {\bibfield  {journal} {\bibinfo  {journal} {Phys. Rev. X}\ }\textbf {\bibinfo
  {volume} {2}},\ \bibinfo {pages} {041006} (\bibinfo {year}
  {2012})}\BibitemShut {NoStop}%
\bibitem [{\citenamefont {Braun}\ \emph {et~al.}(2018)\citenamefont {Braun},
  \citenamefont {Adesso}, \citenamefont {Benatti}, \citenamefont {Floreanini},
  \citenamefont {Marzolino}, \citenamefont {Mitchell},\ and\ \citenamefont
  {Pirandola}}]{Braun2018}%
  \BibitemOpen
  \bibfield  {author} {\bibinfo {author} {\bibfnamefont {D.}~\bibnamefont
  {Braun}}, \bibinfo {author} {\bibfnamefont {G.}~\bibnamefont {Adesso}},
  \bibinfo {author} {\bibfnamefont {F.}~\bibnamefont {Benatti}}, \bibinfo
  {author} {\bibfnamefont {R.}~\bibnamefont {Floreanini}}, \bibinfo {author}
  {\bibfnamefont {U.}~\bibnamefont {Marzolino}}, \bibinfo {author}
  {\bibfnamefont {M.~W.}\ \bibnamefont {Mitchell}},\ and\ \bibinfo {author}
  {\bibfnamefont {S.}~\bibnamefont {Pirandola}},\ }\bibfield  {title} {\bibinfo
  {title} {Quantum-enhanced measurements without entanglement},\ }\href
  {https://doi.org/10.1103/RevModPhys.90.035006} {\bibfield  {journal}
  {\bibinfo  {journal} {Rev. Mod. Phys.}\ }\textbf {\bibinfo {volume} {90}},\
  \bibinfo {pages} {035006} (\bibinfo {year} {2018})}\BibitemShut {NoStop}%
\bibitem [{\citenamefont {Johnsson}\ \emph {et~al.}(2021)\citenamefont
  {Johnsson}, \citenamefont {Poggi}, \citenamefont {Rodriguez}, \citenamefont
  {Alexander},\ and\ \citenamefont {Twamley}}]{Johnsson_2021}%
  \BibitemOpen
  \bibfield  {author} {\bibinfo {author} {\bibfnamefont {M.~T.}\ \bibnamefont
  {Johnsson}}, \bibinfo {author} {\bibfnamefont {P.~M.}\ \bibnamefont {Poggi}},
  \bibinfo {author} {\bibfnamefont {M.~A.}\ \bibnamefont {Rodriguez}}, \bibinfo
  {author} {\bibfnamefont {R.~N.}\ \bibnamefont {Alexander}},\ and\ \bibinfo
  {author} {\bibfnamefont {J.}~\bibnamefont {Twamley}},\ }\bibfield  {title}
  {\bibinfo {title} {Generating nonlinearities from conditional linear
  operations, squeezing, and measurement for quantum computation and
  super-heisenberg sensing},\ }\href
  {https://doi.org/10.1103/PhysRevResearch.3.023222} {\bibfield  {journal}
  {\bibinfo  {journal} {Phys. Rev. Res.}\ }\textbf {\bibinfo {volume} {3}},\
  \bibinfo {pages} {023222} (\bibinfo {year} {2021})}\BibitemShut {NoStop}%
\bibitem [{\citenamefont {Rey}\ \emph {et~al.}(2007)\citenamefont {Rey},
  \citenamefont {Jiang},\ and\ \citenamefont {Lukin}}]{ReyJiangLukin2007}%
  \BibitemOpen
  \bibfield  {author} {\bibinfo {author} {\bibfnamefont {A.~M.}\ \bibnamefont
  {Rey}}, \bibinfo {author} {\bibfnamefont {L.}~\bibnamefont {Jiang}},\ and\
  \bibinfo {author} {\bibfnamefont {M.~D.}\ \bibnamefont {Lukin}},\ }\bibfield
  {title} {\bibinfo {title} {Quantum-limited measurements of atomic scattering
  properties},\ }\href {https://doi.org/10.1103/PhysRevA.76.053617} {\bibfield
  {journal} {\bibinfo  {journal} {Phys. Rev. A}\ }\textbf {\bibinfo {volume}
  {76}},\ \bibinfo {pages} {053617} (\bibinfo {year} {2007})}\BibitemShut
  {NoStop}%
\bibitem [{\citenamefont {Zwierz}\ and\ \citenamefont
  {Wiseman}(2014)}]{ZwierzWiseman2014}%
  \BibitemOpen
  \bibfield  {author} {\bibinfo {author} {\bibfnamefont {M.}~\bibnamefont
  {Zwierz}}\ and\ \bibinfo {author} {\bibfnamefont {H.~M.}\ \bibnamefont
  {Wiseman}},\ }\bibfield  {title} {\bibinfo {title} {Precision bounds for
  noisy nonlinear quantum metrology},\ }\href
  {https://doi.org/10.1103/PhysRevA.89.022107} {\bibfield  {journal} {\bibinfo
  {journal} {Phys. Rev. A}\ }\textbf {\bibinfo {volume} {89}},\ \bibinfo
  {pages} {022107} (\bibinfo {year} {2014})}\BibitemShut {NoStop}%
\bibitem [{\citenamefont {Luis}\ and\ \citenamefont
  {Rivas}(2015)}]{LuisRivas2015}%
  \BibitemOpen
  \bibfield  {author} {\bibinfo {author} {\bibfnamefont {A.}~\bibnamefont
  {Luis}}\ and\ \bibinfo {author} {\bibfnamefont {A.}~\bibnamefont {Rivas}},\
  }\bibfield  {title} {\bibinfo {title} {Nonlinear michelson interferometer for
  improved quantum metrology},\ }\href
  {https://doi.org/10.1103/PhysRevA.92.022104} {\bibfield  {journal} {\bibinfo
  {journal} {Phys. Rev. A}\ }\textbf {\bibinfo {volume} {92}},\ \bibinfo
  {pages} {022104} (\bibinfo {year} {2015})}\BibitemShut {NoStop}%
\bibitem [{\citenamefont {Rossi}\ \emph {et~al.}(2016)\citenamefont {Rossi},
  \citenamefont {Albarelli},\ and\ \citenamefont {Paris}}]{RossiParis2016}%
  \BibitemOpen
  \bibfield  {author} {\bibinfo {author} {\bibfnamefont {M.~A.~C.}\
  \bibnamefont {Rossi}}, \bibinfo {author} {\bibfnamefont {F.}~\bibnamefont
  {Albarelli}},\ and\ \bibinfo {author} {\bibfnamefont {M.~G.~A.}\ \bibnamefont
  {Paris}},\ }\bibfield  {title} {\bibinfo {title} {Enhanced estimation of loss
  in the presence of kerr nonlinearity},\ }\href
  {https://doi.org/10.1103/PhysRevA.93.053805} {\bibfield  {journal} {\bibinfo
  {journal} {Phys. Rev. A}\ }\textbf {\bibinfo {volume} {93}},\ \bibinfo
  {pages} {053805} (\bibinfo {year} {2016})}\BibitemShut {NoStop}%
\bibitem [{\citenamefont {Beau}\ and\ \citenamefont {del
  Campo}(2017)}]{Beau2017}%
  \BibitemOpen
  \bibfield  {author} {\bibinfo {author} {\bibfnamefont {M.}~\bibnamefont
  {Beau}}\ and\ \bibinfo {author} {\bibfnamefont {A.}~\bibnamefont {del
  Campo}},\ }\bibfield  {title} {\bibinfo {title} {Nonlinear quantum metrology
  of many-body open systems},\ }\href
  {https://doi.org/10.1103/PhysRevLett.119.010403} {\bibfield  {journal}
  {\bibinfo  {journal} {Phys. Rev. Lett.}\ }\textbf {\bibinfo {volume} {119}},\
  \bibinfo {pages} {010403} (\bibinfo {year} {2017})}\BibitemShut {NoStop}%
\bibitem [{\citenamefont {Rams}\ \emph {et~al.}(2018)\citenamefont {Rams},
  \citenamefont {Sierant}, \citenamefont {Dutta}, \citenamefont {Horodecki},\
  and\ \citenamefont {Zakrzewski}}]{Rams2018}%
  \BibitemOpen
  \bibfield  {author} {\bibinfo {author} {\bibfnamefont {M.~M.}\ \bibnamefont
  {Rams}}, \bibinfo {author} {\bibfnamefont {P.}~\bibnamefont {Sierant}},
  \bibinfo {author} {\bibfnamefont {O.}~\bibnamefont {Dutta}}, \bibinfo
  {author} {\bibfnamefont {P.}~\bibnamefont {Horodecki}},\ and\ \bibinfo
  {author} {\bibfnamefont {J.}~\bibnamefont {Zakrzewski}},\ }\bibfield  {title}
  {\bibinfo {title} {At the limits of criticality-based quantum metrology:
  Apparent super-heisenberg scaling revisited},\ }\href
  {https://doi.org/10.1103/PhysRevX.8.021022} {\bibfield  {journal} {\bibinfo
  {journal} {Phys. Rev. X}\ }\textbf {\bibinfo {volume} {8}},\ \bibinfo {pages}
  {021022} (\bibinfo {year} {2018})}\BibitemShut {NoStop}%
\bibitem [{\citenamefont {Czajkowski}\ \emph {et~al.}(2019)\citenamefont
  {Czajkowski}, \citenamefont {Pawłowski},\ and\ \citenamefont
  {Demkowicz-Dobrzański}}]{Czajkowski_2019}%
  \BibitemOpen
  \bibfield  {author} {\bibinfo {author} {\bibfnamefont {J.}~\bibnamefont
  {Czajkowski}}, \bibinfo {author} {\bibfnamefont {K.}~\bibnamefont
  {Pawłowski}},\ and\ \bibinfo {author} {\bibfnamefont {R.}~\bibnamefont
  {Demkowicz-Dobrzański}},\ }\bibfield  {title} {\bibinfo {title} {Many-body
  effects in quantum metrology},\ }\href
  {https://doi.org/10.1088/1367-2630/ab1fc2} {\bibfield  {journal} {\bibinfo
  {journal} {New Journal of Physics}\ }\textbf {\bibinfo {volume} {21}},\
  \bibinfo {pages} {053031} (\bibinfo {year} {2019})}\BibitemShut {NoStop}%
\bibitem [{\citenamefont {Imai}\ \emph {et~al.}(2024)\citenamefont {Imai},
  \citenamefont {Smerzi},\ and\ \citenamefont {Pezzè}}]{ImaiSmerzi2024}%
  \BibitemOpen
  \bibfield  {author} {\bibinfo {author} {\bibfnamefont {S.}~\bibnamefont
  {Imai}}, \bibinfo {author} {\bibfnamefont {A.}~\bibnamefont {Smerzi}},\ and\
  \bibinfo {author} {\bibfnamefont {L.}~\bibnamefont {Pezzè}},\ }\href
  {https://arxiv.org/abs/2405.15703} {\bibinfo {title} {Metrological usefulness
  of entanglement and nonlinear hamiltonians}} (\bibinfo {year} {2024}),\
  \Eprint {https://arxiv.org/abs/2405.15703} {arXiv:2405.15703 [quant-ph]}
  \BibitemShut {NoStop}%
\bibitem [{\citenamefont {Riberi}\ and\ \citenamefont
  {Viola}(2024)}]{RiberiViola2024}%
  \BibitemOpen
  \bibfield  {author} {\bibinfo {author} {\bibfnamefont {F.}~\bibnamefont
  {Riberi}}\ and\ \bibinfo {author} {\bibfnamefont {L.}~\bibnamefont {Viola}},\
  }\href {https://arxiv.org/abs/2501.00189} {\bibinfo {title} {Optimal
  asymptotic precision bounds for nonlinear quantum metrology under collective
  dephasing}} (\bibinfo {year} {2024}),\ \Eprint
  {https://arxiv.org/abs/2501.00189} {arXiv:2501.00189 [quant-ph]} \BibitemShut
  {NoStop}%
\bibitem [{\citenamefont {Sewell}\ \emph {et~al.}(2014)\citenamefont {Sewell},
  \citenamefont {Napolitano}, \citenamefont {Behbood}, \citenamefont
  {Colangelo}, \citenamefont {Martin~Ciurana},\ and\ \citenamefont
  {Mitchell}}]{Sewell2014}%
  \BibitemOpen
  \bibfield  {author} {\bibinfo {author} {\bibfnamefont {R.~J.}\ \bibnamefont
  {Sewell}}, \bibinfo {author} {\bibfnamefont {M.}~\bibnamefont {Napolitano}},
  \bibinfo {author} {\bibfnamefont {N.}~\bibnamefont {Behbood}}, \bibinfo
  {author} {\bibfnamefont {G.}~\bibnamefont {Colangelo}}, \bibinfo {author}
  {\bibfnamefont {F.}~\bibnamefont {Martin~Ciurana}},\ and\ \bibinfo {author}
  {\bibfnamefont {M.~W.}\ \bibnamefont {Mitchell}},\ }\bibfield  {title}
  {\bibinfo {title} {Ultrasensitive atomic spin measurements with a nonlinear
  interferometer},\ }\href {https://doi.org/10.1103/PhysRevX.4.021045}
  {\bibfield  {journal} {\bibinfo  {journal} {Phys. Rev. X}\ }\textbf {\bibinfo
  {volume} {4}},\ \bibinfo {pages} {021045} (\bibinfo {year}
  {2014})}\BibitemShut {NoStop}%
\bibitem [{\citenamefont {Nie}\ \emph {et~al.}(2018{\natexlab{a}})\citenamefont
  {Nie}, \citenamefont {Huang}, \citenamefont {Li}, \citenamefont {Zheng},
  \citenamefont {Lee}, \citenamefont {Peng},\ and\ \citenamefont
  {Du}}]{Xinfang2018}%
  \BibitemOpen
  \bibfield  {author} {\bibinfo {author} {\bibfnamefont {X.}~\bibnamefont
  {Nie}}, \bibinfo {author} {\bibfnamefont {J.}~\bibnamefont {Huang}}, \bibinfo
  {author} {\bibfnamefont {Z.}~\bibnamefont {Li}}, \bibinfo {author}
  {\bibfnamefont {W.}~\bibnamefont {Zheng}}, \bibinfo {author} {\bibfnamefont
  {C.}~\bibnamefont {Lee}}, \bibinfo {author} {\bibfnamefont {X.}~\bibnamefont
  {Peng}},\ and\ \bibinfo {author} {\bibfnamefont {J.}~\bibnamefont {Du}},\
  }\bibfield  {title} {\bibinfo {title} {Experimental demonstration of
  nonlinear quantum metrology with optimal quantum state},\ }\href
  {https://doi.org/https://doi.org/10.1016/j.scib.2018.03.007} {\bibfield
  {journal} {\bibinfo  {journal} {Science Bulletin}\ }\textbf {\bibinfo
  {volume} {63}},\ \bibinfo {pages} {469} (\bibinfo {year}
  {2018}{\natexlab{a}})}\BibitemShut {NoStop}%
\bibitem [{\citenamefont {Huang}\ \emph {et~al.}(2025)\citenamefont {Huang},
  \citenamefont {Xia}, \citenamefont {Yang},\ and\ \citenamefont
  {Zeng}}]{Huang2025}%
  \BibitemOpen
  \bibfield  {author} {\bibinfo {author} {\bibfnamefont {J.}~\bibnamefont
  {Huang}}, \bibinfo {author} {\bibfnamefont {B.}~\bibnamefont {Xia}}, \bibinfo
  {author} {\bibfnamefont {Y.}~\bibnamefont {Yang}},\ and\ \bibinfo {author}
  {\bibfnamefont {G.}~\bibnamefont {Zeng}},\ }\href
  {https://doi.org/10.21203/rs.3.rs-5920586/v1} {\bibinfo {title} {Nonlinear
  {Heisenberg} {Limit} via {Uncertainty} {Principle} in {Quantum} {Metrology}}}
  (\bibinfo {year} {2025})\BibitemShut {NoStop}%
\bibitem [{\citenamefont {Braunstein}\ and\ \citenamefont
  {Caves}(1994)}]{BraunCaves1994}%
  \BibitemOpen
  \bibfield  {author} {\bibinfo {author} {\bibfnamefont {S.~L.}\ \bibnamefont
  {Braunstein}}\ and\ \bibinfo {author} {\bibfnamefont {C.~M.}\ \bibnamefont
  {Caves}},\ }\bibfield  {title} {\bibinfo {title} {Statistical distance and
  the geometry of quantum states},\ }\href
  {https://doi.org/10.1103/PhysRevLett.72.3439} {\bibfield  {journal} {\bibinfo
   {journal} {Phys. Rev. Lett.}\ }\textbf {\bibinfo {volume} {72}},\ \bibinfo
  {pages} {3439} (\bibinfo {year} {1994})}\BibitemShut {NoStop}%
\bibitem [{\citenamefont {Pezze}\ and\ \citenamefont
  {Smerzi}(2014)}]{Pezze2014}%
  \BibitemOpen
  \bibfield  {author} {\bibinfo {author} {\bibfnamefont {L.}~\bibnamefont
  {Pezze}}\ and\ \bibinfo {author} {\bibfnamefont {A.}~\bibnamefont {Smerzi}},\
  }\bibfield  {title} {\bibinfo {title} {Quantum theory of phase estimation},\
  }\href {https://doi.org/10.48550/arXiv.1411.5164} {\bibfield  {journal}
  {\bibinfo  {journal} {arXiv preprint arXiv:1411.5164}\ } (\bibinfo {year}
  {2014})}\BibitemShut {NoStop}%
\bibitem [{\citenamefont {Liu}\ \emph {et~al.}(2019)\citenamefont {Liu},
  \citenamefont {Yuan}, \citenamefont {Lu},\ and\ \citenamefont
  {Wang}}]{liu2019}%
  \BibitemOpen
  \bibfield  {author} {\bibinfo {author} {\bibfnamefont {J.}~\bibnamefont
  {Liu}}, \bibinfo {author} {\bibfnamefont {H.}~\bibnamefont {Yuan}}, \bibinfo
  {author} {\bibfnamefont {X.-M.}\ \bibnamefont {Lu}},\ and\ \bibinfo {author}
  {\bibfnamefont {X.}~\bibnamefont {Wang}},\ }\bibfield  {title} {\bibinfo
  {title} {Quantum fisher information matrix and multiparameter estimation},\
  }\href {https://doi.org/10.1088/1751-8121/ab5d4d} {\bibfield  {journal}
  {\bibinfo  {journal} {Journal of Physics A: Mathematical and Theoretical}\
  }\textbf {\bibinfo {volume} {53}},\ \bibinfo {pages} {023001} (\bibinfo
  {year} {2019})}\BibitemShut {NoStop}%
\bibitem [{Note2({\natexlab{a}})}]{DegenCondition}%
  \BibitemOpen
  \bibinfo {note} {This optimally holds for a non-degenerate spectrum of the
  Hamiltonian.}\BibitemShut {Stop}%
\bibitem [{\citenamefont {Zhou}\ \emph {et~al.}(2018)\citenamefont {Zhou},
  \citenamefont {Zhang}, \citenamefont {Preskill},\ and\ \citenamefont
  {Jiang}}]{zhou2018}%
  \BibitemOpen
  \bibfield  {author} {\bibinfo {author} {\bibfnamefont {S.}~\bibnamefont
  {Zhou}}, \bibinfo {author} {\bibfnamefont {M.}~\bibnamefont {Zhang}},
  \bibinfo {author} {\bibfnamefont {J.}~\bibnamefont {Preskill}},\ and\
  \bibinfo {author} {\bibfnamefont {L.}~\bibnamefont {Jiang}},\ }\bibfield
  {title} {\bibinfo {title} {Achieving the heisenberg limit in quantum
  metrology using quantum error correction},\ }\href
  {https://doi.org/10.1038/s41467-017-02510-3} {\bibfield  {journal} {\bibinfo
  {journal} {Nature communications}\ }\textbf {\bibinfo {volume} {9}},\
  \bibinfo {pages} {78} (\bibinfo {year} {2018})}\BibitemShut {NoStop}%
\bibitem [{Note1()}]{suppMat}%
  \BibitemOpen
  \bibinfo {note} {See Supplemental Material which includes Refs.[references
  here]. Here, we show the edge cases for quantum errors in bounded Hilbert
  spaces, prove critical phase Theorem for general Hilbert space, show the case
  of continuos time, present a quick demonstration of quantum error correction,
  etc.}\BibitemShut {Stop}%
\bibitem [{\citenamefont {Marinoff}\ \emph {et~al.}(2024)\citenamefont
  {Marinoff}, \citenamefont {Bush},\ and\ \citenamefont
  {Combes}}]{Marinoff2024}%
  \BibitemOpen
  \bibfield  {author} {\bibinfo {author} {\bibfnamefont {B.}~\bibnamefont
  {Marinoff}}, \bibinfo {author} {\bibfnamefont {M.}~\bibnamefont {Bush}},\
  and\ \bibinfo {author} {\bibfnamefont {J.}~\bibnamefont {Combes}},\
  }\bibfield  {title} {\bibinfo {title} {Explicit error-correction scheme and
  code distance for bosonic codes with rotational symmetry},\ }\href
  {https://doi.org/10.1103/PhysRevA.109.032436} {\bibfield  {journal} {\bibinfo
   {journal} {Physical Review A}\ }\textbf {\bibinfo {volume} {109}},\ \bibinfo
  {pages} {032436} (\bibinfo {year} {2024})}\BibitemShut {NoStop}%
\bibitem [{Note2({\natexlab{b}})}]{normNote}%
  \BibitemOpen
  \bibinfo {note} {If we don't assume $|\langle \Psi | \Psi \rangle|^2 = 1$,
  then $\langle \Psi | \hat{V}_k(\varphi) | \Psi \rangle = |\langle \Psi | \Psi
  \rangle|^2 - \mathcal{O}(\epsilon)$. The interpretation that resultant
  probability distributions will be at most $\epsilon$ different remains,
  except now the difference is between two quantum channels rather than between
  two states.}\BibitemShut {Stop}%
\bibitem [{\citenamefont {Kok}\ \emph {et~al.}(2002)\citenamefont {Kok},
  \citenamefont {Lee},\ and\ \citenamefont {Dowling}}]{N00N}%
  \BibitemOpen
  \bibfield  {author} {\bibinfo {author} {\bibfnamefont {P.}~\bibnamefont
  {Kok}}, \bibinfo {author} {\bibfnamefont {H.}~\bibnamefont {Lee}},\ and\
  \bibinfo {author} {\bibfnamefont {J.~P.}\ \bibnamefont {Dowling}},\
  }\bibfield  {title} {\bibinfo {title} {Creation of large-photon-number path
  entanglement conditioned on photodetection},\ }\href
  {https://doi.org/10.1103/PhysRevA.65.052104} {\bibfield  {journal} {\bibinfo
  {journal} {Phys. Rev. A}\ }\textbf {\bibinfo {volume} {65}},\ \bibinfo
  {pages} {052104} (\bibinfo {year} {2002})}\BibitemShut {NoStop}%
\bibitem [{\citenamefont {Suzuki}\ \emph {et~al.}(2020)\citenamefont {Suzuki},
  \citenamefont {Yang},\ and\ \citenamefont {Hayashi}}]{SuzukiHayashi2020}%
  \BibitemOpen
  \bibfield  {author} {\bibinfo {author} {\bibfnamefont {J.}~\bibnamefont
  {Suzuki}}, \bibinfo {author} {\bibfnamefont {Y.}~\bibnamefont {Yang}},\ and\
  \bibinfo {author} {\bibfnamefont {M.}~\bibnamefont {Hayashi}},\ }\bibfield
  {title} {\bibinfo {title} {Quantum state estimation with nuisance
  parameters},\ }\href {https://doi.org/10.1088/1751-8121/ab8b78} {\bibfield
  {journal} {\bibinfo  {journal} {Journal of Physics A: Mathematical and
  Theoretical}\ }\textbf {\bibinfo {volume} {53}},\ \bibinfo {pages} {453001}
  (\bibinfo {year} {2020})}\BibitemShut {NoStop}%
\bibitem [{\citenamefont {Pedersen}\ \emph {et~al.}(2007)\citenamefont
  {Pedersen}, \citenamefont {Møller},\ and\ \citenamefont
  {Mølmer}}]{Molmer2007}%
  \BibitemOpen
  \bibfield  {author} {\bibinfo {author} {\bibfnamefont {L.}~\bibnamefont
  {Pedersen}}, \bibinfo {author} {\bibfnamefont {N.}~\bibnamefont {Møller}},\
  and\ \bibinfo {author} {\bibfnamefont {K.}~\bibnamefont {Mølmer}},\
  }\bibfield  {title} {\bibinfo {title} {Fidelity of quantum operations},\
  }\href {https://doi.org/10.1016/j.physleta.2007.02.069} {\bibfield  {journal}
  {\bibinfo  {journal} {Physics Letters A}\ }\textbf {\bibinfo {volume}
  {367}},\ \bibinfo {pages} {47} (\bibinfo {year} {2007})}\BibitemShut
  {NoStop}%
\bibitem [{\citenamefont {Nielsen}(2002)}]{Nielsen2002}%
  \BibitemOpen
  \bibfield  {author} {\bibinfo {author} {\bibfnamefont {M.~A.}\ \bibnamefont
  {Nielsen}},\ }\bibfield  {title} {\bibinfo {title} {A simple formula for the
  average gate fidelity of a quantum dynamical operation},\ }\href
  {https://doi.org/https://doi.org/10.1016/S0375-9601(02)01272-0} {\bibfield
  {journal} {\bibinfo  {journal} {Physics Letters A}\ }\textbf {\bibinfo
  {volume} {303}},\ \bibinfo {pages} {249} (\bibinfo {year}
  {2002})}\BibitemShut {NoStop}%
\bibitem [{\citenamefont {Heeres}\ \emph {et~al.}(2015)\citenamefont {Heeres},
  \citenamefont {Vlastakis}, \citenamefont {Holland}, \citenamefont
  {Krastanov}, \citenamefont {Albert}, \citenamefont {Frunzio}, \citenamefont
  {Jiang},\ and\ \citenamefont {Schoelkopf}}]{SNAPgateEXP2015}%
  \BibitemOpen
  \bibfield  {author} {\bibinfo {author} {\bibfnamefont {R.~W.}\ \bibnamefont
  {Heeres}}, \bibinfo {author} {\bibfnamefont {B.}~\bibnamefont {Vlastakis}},
  \bibinfo {author} {\bibfnamefont {E.}~\bibnamefont {Holland}}, \bibinfo
  {author} {\bibfnamefont {S.}~\bibnamefont {Krastanov}}, \bibinfo {author}
  {\bibfnamefont {V.~V.}\ \bibnamefont {Albert}}, \bibinfo {author}
  {\bibfnamefont {L.}~\bibnamefont {Frunzio}}, \bibinfo {author} {\bibfnamefont
  {L.}~\bibnamefont {Jiang}},\ and\ \bibinfo {author} {\bibfnamefont {R.~J.}\
  \bibnamefont {Schoelkopf}},\ }\bibfield  {title} {\bibinfo {title} {Cavity
  state manipulation using photon-number selective phase gates},\ }\href
  {https://doi.org/10.1103/PhysRevLett.115.137002} {\bibfield  {journal}
  {\bibinfo  {journal} {Phys. Rev. Lett.}\ }\textbf {\bibinfo {volume} {115}},\
  \bibinfo {pages} {137002} (\bibinfo {year} {2015})}\BibitemShut {NoStop}%
\bibitem [{\citenamefont {Krastanov}\ \emph {et~al.}(2015)\citenamefont
  {Krastanov}, \citenamefont {Albert}, \citenamefont {Shen}, \citenamefont
  {Zou}, \citenamefont {Heeres}, \citenamefont {Vlastakis}, \citenamefont
  {Schoelkopf},\ and\ \citenamefont {Jiang}}]{SNAPgateTHY2015}%
  \BibitemOpen
  \bibfield  {author} {\bibinfo {author} {\bibfnamefont {S.}~\bibnamefont
  {Krastanov}}, \bibinfo {author} {\bibfnamefont {V.~V.}\ \bibnamefont
  {Albert}}, \bibinfo {author} {\bibfnamefont {C.}~\bibnamefont {Shen}},
  \bibinfo {author} {\bibfnamefont {C.-L.}\ \bibnamefont {Zou}}, \bibinfo
  {author} {\bibfnamefont {R.~W.}\ \bibnamefont {Heeres}}, \bibinfo {author}
  {\bibfnamefont {B.}~\bibnamefont {Vlastakis}}, \bibinfo {author}
  {\bibfnamefont {R.~J.}\ \bibnamefont {Schoelkopf}},\ and\ \bibinfo {author}
  {\bibfnamefont {L.}~\bibnamefont {Jiang}},\ }\bibfield  {title} {\bibinfo
  {title} {Universal control of an oscillator with dispersive coupling to a
  qubit},\ }\href {https://doi.org/10.1103/PhysRevA.92.040303} {\bibfield
  {journal} {\bibinfo  {journal} {Phys. Rev. A}\ }\textbf {\bibinfo {volume}
  {92}},\ \bibinfo {pages} {040303} (\bibinfo {year} {2015})}\BibitemShut
  {NoStop}%
\bibitem [{\citenamefont {Tsarev}\ \emph {et~al.}(2019)\citenamefont {Tsarev},
  \citenamefont {Ngo}, \citenamefont {Lee},\ and\ \citenamefont
  {Alodjants}}]{Tsarev_2019}%
  \BibitemOpen
  \bibfield  {author} {\bibinfo {author} {\bibfnamefont {D.~V.}\ \bibnamefont
  {Tsarev}}, \bibinfo {author} {\bibfnamefont {T.~V.}\ \bibnamefont {Ngo}},
  \bibinfo {author} {\bibfnamefont {R.-K.}\ \bibnamefont {Lee}},\ and\ \bibinfo
  {author} {\bibfnamefont {A.~P.}\ \bibnamefont {Alodjants}},\ }\bibfield
  {title} {\bibinfo {title} {Nonlinear quantum metrology with moving
  matter-wave solitons},\ }\href {https://doi.org/10.1088/1367-2630/ab398e}
  {\bibfield  {journal} {\bibinfo  {journal} {New Journal of Physics}\ }\textbf
  {\bibinfo {volume} {21}},\ \bibinfo {pages} {083041} (\bibinfo {year}
  {2019})}\BibitemShut {NoStop}%
\bibitem [{\citenamefont {Nie}\ \emph {et~al.}(2018{\natexlab{b}})\citenamefont
  {Nie}, \citenamefont {Huang}, \citenamefont {Li}, \citenamefont {Zheng},
  \citenamefont {Lee}, \citenamefont {Peng},\ and\ \citenamefont
  {Du}}]{nie2018}%
  \BibitemOpen
  \bibfield  {author} {\bibinfo {author} {\bibfnamefont {X.}~\bibnamefont
  {Nie}}, \bibinfo {author} {\bibfnamefont {J.}~\bibnamefont {Huang}}, \bibinfo
  {author} {\bibfnamefont {Z.}~\bibnamefont {Li}}, \bibinfo {author}
  {\bibfnamefont {W.}~\bibnamefont {Zheng}}, \bibinfo {author} {\bibfnamefont
  {C.}~\bibnamefont {Lee}}, \bibinfo {author} {\bibfnamefont {X.}~\bibnamefont
  {Peng}},\ and\ \bibinfo {author} {\bibfnamefont {J.}~\bibnamefont {Du}},\
  }\bibfield  {title} {\bibinfo {title} {Experimental demonstration of
  nonlinear quantum metrology with optimal quantum state},\ }\href
  {https://doi.org/10.1016/j.scib.2018.03.007} {\bibfield  {journal} {\bibinfo
  {journal} {Science Bulletin}\ }\textbf {\bibinfo {volume} {63}},\ \bibinfo
  {pages} {469} (\bibinfo {year} {2018}{\natexlab{b}})}\BibitemShut {NoStop}%
\bibitem [{\citenamefont {Luo}\ \emph {et~al.}(2024)\citenamefont {Luo},
  \citenamefont {Zhang}, \citenamefont {Maruko}, \citenamefont {Bohr},
  \citenamefont {Chu}, \citenamefont {Rey},\ and\ \citenamefont
  {Thompson}}]{luo2024}%
  \BibitemOpen
  \bibfield  {author} {\bibinfo {author} {\bibfnamefont {C.}~\bibnamefont
  {Luo}}, \bibinfo {author} {\bibfnamefont {H.}~\bibnamefont {Zhang}}, \bibinfo
  {author} {\bibfnamefont {C.}~\bibnamefont {Maruko}}, \bibinfo {author}
  {\bibfnamefont {E.~A.}\ \bibnamefont {Bohr}}, \bibinfo {author}
  {\bibfnamefont {A.}~\bibnamefont {Chu}}, \bibinfo {author} {\bibfnamefont
  {A.~M.}\ \bibnamefont {Rey}},\ and\ \bibinfo {author} {\bibfnamefont {J.~K.}\
  \bibnamefont {Thompson}},\ }\bibfield  {title} {\bibinfo {title} {Realization
  of three and four-body interactions between momentum states in a cavity
  through optical dressing},\ }\href
  {https://doi.org/10.48550/arXiv.2410.12132} {\bibfield  {journal} {\bibinfo
  {journal} {arXiv preprint arXiv:2410.12132}\ } (\bibinfo {year}
  {2024})}\BibitemShut {NoStop}%
\bibitem [{\citenamefont {Pauli}\ and\ \citenamefont
  {Villars}(1949)}]{pauli1949}%
  \BibitemOpen
  \bibfield  {author} {\bibinfo {author} {\bibfnamefont {W.}~\bibnamefont
  {Pauli}}\ and\ \bibinfo {author} {\bibfnamefont {F.}~\bibnamefont
  {Villars}},\ }\bibfield  {title} {\bibinfo {title} {On the invariant
  regularization in relativistic quantum theory},\ }\href@noop {} {\bibfield
  {journal} {\bibinfo  {journal} {Reviews of Modern Physics}\ }\textbf
  {\bibinfo {volume} {21}},\ \bibinfo {pages} {434} (\bibinfo {year}
  {1949})}\BibitemShut {NoStop}%
\end{thebibliography}
\end{document}